\newtheorem{theorem}{Theorem}
\newtheorem{axiom}[theorem]{Axiom}
\newtheorem{conjecture}[theorem]{Conjecture}
\newtheorem{corollary}{Corollary}
\newtheorem{definition}{Definition}
\newtheorem{example}{Example}
\newtheorem{exercise}[theorem]{Exercise}
\newtheorem{lemma}{Lemma}
\newtheorem{proposition}{Proposition}
\newtheorem{remark}{Remark}
\newenvironment{proof}[1][Proof]{\noindent\textbf{#1.} }{\ \rule{0.5em}{0.5em}}
\chardef\@x10\chardef\@xv60
\def\tcitime{
\def\@time{%
  \@minute\time\@hour\@minute\divide\@hour\@xv
  \ifnum\@hour<\@x 0\fi\the\@hour:%
  \multiply\@hour\@xv\advance\@minute-\@hour
  \ifnum\@minute<\@x 0\fi\the\@minute
  }}%
\def\QCTOpt[#1]#2{%
  \def\QCTOptB{#1}
  \def\QCTOptA{#2}
}
\def\QCTNOpt#1{%
  \def\QCTOptA{#1}
  \let\QCTOptB\empty
}
\def\Qct{%
  \@ifnextchar[{%
    \QCTOpt}{\QCTNOpt}
}
\def\QCBOpt[#1]#2{%
  \def\QCBOptB{#1}
  \def\QCBOptA{#2}
}
\def\QCBNOpt#1{%
  \def\QCBOptA{#1}
  \let\QCBOptB\empty
}
\def\Qcb{%
  \@ifnextchar[{%
    \QCBOpt}{\QCBNOpt}
}
\def\PrepCapArgs{%
  \ifx\QCBOptA\empty
    \ifx\QCTOptA\empty
      {}%
    \else
      \ifx\QCTOptB\empty
        {\QCTOptA}%
      \else
        [\QCTOptB]{\QCTOptA}%
      \fi
    \fi
  \else
    \ifx\QCBOptA\empty
      {}%
    \else
      \ifx\QCBOptB\empty
        {\QCBOptA}%
      \else
        [\QCBOptB]{\QCBOptA}%
      \fi
    \fi
  \fi
}
\def\GRAPHICSPS#1{%
 \ifcase\GRAPHICSTYPE
   \special{ps: #1}%
 \or
   \special{language "PS", include "#1"}%
 \fi
}%
\def\graffile#1#2#3#4{%
    \leavevmode
    \raise -#4 \BOXTHEFRAME{%
        \hbox to #2{\raise #3\hbox to #2{\null #1\hfil}}}%
}%
\def\draftbox#1#2#3#4{%
 \leavevmode\raise -#4 \hbox{%
  \frame{\rlap{\protect\tiny #1}\hbox to #2%
   {\vrule height#3 width\z@ depth\z@\hfil}%
  }%
 }%
}%
\newif\ifwasdraft
\def\GRAPHIC#1#2#3#4#5{%
 \ifnum\draft=\@ne\draftbox{#2}{#3}{#4}{#5}%
  \else\graffile{#1}{#3}{#4}{#5}%
  \fi
 }%
\def\addtoLaTeXparams#1{%
    \edef\LaTeXparams{\LaTeXparams #1}}%
\newif\ifBoxFrame \BoxFramefalse
\newif\ifOverFrame \OverFramefalse
\newif\ifUnderFrame \UnderFramefalse
\def\BOXTHEFRAME#1{%
   \hbox{%
      \ifBoxFrame
         \frame{#1}%
      \else
         {#1}%
      \fi
   }%
}
\def\doFRAMEparams#1{\BoxFramefalse\OverFramefalse\UnderFramefalse\readFRAMEparams#1\end}%
\def\readFRAMEparams#1{%
   \ifx#1\end%
  \let\next=\relax
  \else
  \ifx#1i\dispkind=\z@\fi
  \ifx#1d\dispkind=\@ne\fi
  \ifx#1f\dispkind=\tw@\fi
	\ifx#1h
    \ifnum\dispkind=\tw@
			\@ifundefined{@HHfloat}{
			  \addtoLaTeXparams{h}
		 	 }{
         \def\LaTeXparams{H}
         \typeout{tcilatex: attribute align pos of FRAME  set to H}
         \typeout{\space \space \space \space all other placement options (tbp) are ignored }
   		 }
	  \else
			\addtoLaTeXparams{h}
    \fi
	\fi
  \if\LaTeXparams H
  	 \ifx#1t\fi	 
  	 \ifx#1b\fi	 
     \ifx#1p\fi
  \else
      \ifx#1t\addtoLaTeXparams{t}\fi
      \ifx#1b\addtoLaTeXparams{b}\fi
      \ifx#1p\addtoLaTeXparams{p}\fi
  \fi

  \ifx#1X\BoxFrametrue\fi
  \ifx#1O\OverFrametrue\fi
  \ifx#1U\UnderFrametrue\fi
  \ifx#1w
    \ifnum\draft=1\wasdrafttrue\else\wasdraftfalse\fi
    \draft=\@ne
  \fi
  \let\next=\readFRAMEparams
  \fi
 \next
 }%
\def\IFRAME#1#2#3#4#5#6{%
      \bgroup
      \let\QCTOptA\empty
      \let\QCTOptB\empty
      \let\QCBOptA\empty
      \let\QCBOptB\empty
      #6%
      \parindent=0pt%
      \leftskip=0pt
      \rightskip=0pt
      \setbox0 = \hbox{\QCBOptA}%
      \@tempdima = #1\relax
      \ifOverFrame
          \typeout{This is not implemented yet}%
          \show\HELP
      \else
         \ifdim\wd0>\@tempdima
            \advance\@tempdima by \@tempdima
            \ifdim\wd0 >\@tempdima
               \textwidth=\@tempdima
               \setbox1 =\vbox{%
                  \noindent\hbox to \@tempdima{\hfill\GRAPHIC{#5}{#4}{#1}{#2}{#3}\hfill}\\%
                  \noindent\hbox to \@tempdima{\parbox[b]{\@tempdima}{\QCBOptA}}%
               }%
               \wd1=\@tempdima
            \else
               \textwidth=\wd0
               \setbox1 =\vbox{%
                 \noindent\hbox to \wd0{\hfill\GRAPHIC{#5}{#4}{#1}{#2}{#3}\hfill}\\%
                 \noindent\hbox{\QCBOptA}%
               }%
               \wd1=\wd0
            \fi
         \else
            \ifdim\wd0>0pt
              \hsize=\@tempdima
              \setbox1 =\vbox{%
                \unskip\GRAPHIC{#5}{#4}{#1}{#2}{0pt}%
                \break
                \unskip\hbox to \@tempdima{\hfill \QCBOptA\hfill}%
              }%
              \wd1=\@tempdima
           \else
              \hsize=\@tempdima
              \setbox1 =\vbox{%
                \unskip\GRAPHIC{#5}{#4}{#1}{#2}{0pt}%
              }%
              \wd1=\@tempdima
           \fi
         \fi
         \@tempdimb=\ht1
         \advance\@tempdimb by \dp1
         \advance\@tempdimb by -#2%
         \advance\@tempdimb by #3%
         \leavevmode
         \raise -\@tempdimb \hbox{\box1}%
      \fi
      \egroup%
}%
\def\DFRAME#1#2#3#4#5{%
 \begin{center}
     \let\QCTOptA\empty
     \let\QCTOptB\empty
     \let\QCBOptA\empty
     \let\QCBOptB\empty
     \ifOverFrame 
        #5\QCTOptA\par
     \fi
     \GRAPHIC{#4}{#3}{#1}{#2}{\z@}
     \ifUnderFrame 
        \nobreak\par #5\QCBOptA
     \fi
 \end{center}%
 }%
\def\FFRAME#1#2#3#4#5#6#7{%
 \begin{figure}[#1]%
  \let\QCTOptA\empty
  \let\QCTOptB\empty
  \let\QCBOptA\empty
  \let\QCBOptB\empty
  \ifOverFrame
    #4
    \ifx\QCTOptA\empty
    \else
      \ifx\QCTOptB\empty
        \caption{\QCTOptA}%
      \else
        \caption[\QCTOptB]{\QCTOptA}%
      \fi
    \fi
    \ifUnderFrame\else
      \label{#5}%
    \fi
  \else
    \UnderFrametrue%
  \fi
  \begin{center}\GRAPHIC{#7}{#6}{#2}{#3}{\z@}\end{center}%
  \ifUnderFrame
    #4
    \ifx\QCBOptA\empty
      \caption{}%
    \else
      \ifx\QCBOptB\empty
        \caption{\QCBOptA}%
      \else
        \caption[\QCBOptB]{\QCBOptA}%
      \fi
    \fi
    \label{#5}%
  \fi
  \end{figure}%
 }%
\def\makeactives{
  \catcode`\"=\active
  \catcode`\;=\active
  \catcode`\:=\active
  \catcode`\'=\active
  \catcode`\~=\active
}
   \gdef\activesoff{%
      \def"{\string"}
      \def;{\string;}
      \def:{\string:}
      \def'{\string'}
      \def~{\string~}
    }
\def\FRAME#1#2#3#4#5#6#7#8{%
 \bgroup
 \@ifundefined{bbl@deactivate}{}{\activesoff}
 \ifnum\draft=\@ne
   \wasdrafttrue
 \else
   \wasdraftfalse%
 \fi
 \def\LaTeXparams{}%
 \dispkind=\z@
 \def\LaTeXparams{}%
 \doFRAMEparams{#1}%
 \ifnum\dispkind=\z@\IFRAME{#2}{#3}{#4}{#7}{#8}{#5}\else
  \ifnum\dispkind=\@ne\DFRAME{#2}{#3}{#7}{#8}{#5}\else
   \ifnum\dispkind=\tw@
    \edef\@tempa{\noexpand\FFRAME{\LaTeXparams}}%
    \@tempa{#2}{#3}{#5}{#6}{#7}{#8}%
    \fi
   \fi
  \fi
  \ifwasdraft\draft=1\else\draft=0\fi{}%
  \egroup
 }%
\def\TEXUX#1{"texux"}
\def\func#1{\mathop{\rm #1}}%
\long\def\QQQ#1#2{%
     \long\expandafter\def\csname#1\endcsname{#2}}%
\long\def\QQA#1#2{}%
\def\QTR#1#2{{\csname#1\endcsname #2}}
\def\EXPAND#1[#2]#3{}%
\def\NOEXPAND#1[#2]#3{}%
\def\LaTeXparent#1{}%
\def\ChildStyles#1{}%
\def\ChildDefaults#1{}%
\def\QTagDef#1#2#3{}%
\def\QQfnmark#1{\footnotemark}
\def\makeatletter\input gnuindex.sty\makeatother\makeindex{\makeatletter\input gnuindex.sty\makeatother\makeindex}%
\def\initial#1{\bigbreak{\raggedright\large\bf #1}\kern 2\p@\penalty3000}}%
 \def\abstract{%
  \if@twocolumn
   \section*{Abstract (Not appropriate in this style!)}%
   \else \small 
   \begin{center}{\bf Abstract\vspace{-.5em}\vspace{\z@}}\end{center}%
   \quotation 
   \fi
  }%
   \def\registered{\relax\ifmmode{}\r@gistered
                    \else$\m@th\r@gistered$\fi}%
 \def\r@gistered{^{\ooalign
  {\hfil\raise.07ex\hbox{$\scriptstyle\rm\text{R}$}\hfil\crcr
  \mathhexbox20D}}}}{}%
\newdimen\theight
\def\Column{%
 \vadjust{\setbox\z@=\hbox{\scriptsize\quad\quad tcol}%
  \theight=\ht\z@\advance\theight by \dp\z@\advance\theight by \lineskip
  \kern -\theight \vbox to \theight{%
   \rightline{\rlap{\box\z@}}%
   \vss
   }%
  }%
 }%
\def\qed{%
 \ifhmode\unskip\nobreak\fi\ifmmode\ifinner\else\hskip5\p@\fi\fi
 \hbox{\hskip5\p@\vrule width4\p@ height6\p@ depth1.5\p@\hskip\p@}%
 }%
\def\miss{\hbox{\vrule height2\p@ width 2\p@ depth\z@}}%
\def\tcol#1{{\baselineskip=6\p@ \vcenter{#1}} \Column}  %
\def\newfmtname{LaTeX2e}
\def\chkcompat{%
   \if@compatibility
   \else
     \usepackage{latexsym}
   \fi
}
  \DeclareOldFontCommand{\rm}{\normalfont\rmfamily}{\mathrm}
  \DeclareOldFontCommand{\sf}{\normalfont\sffamily}{\mathsf}
  \DeclareOldFontCommand{\tt}{\normalfont\ttfamily}{\mathtt}
  \DeclareOldFontCommand{\bf}{\normalfont\bfseries}{\mathbf}
  \DeclareOldFontCommand{\it}{\normalfont\itshape}{\mathit}
  \DeclareOldFontCommand{\sl}{\normalfont\slshape}{\@nomath\sl}
  \DeclareOldFontCommand{\sc}{\normalfont\scshape}{\@nomath\sc}
\def\alpha{{\Greekmath 010B}}%
\def\beta{{\Greekmath 010C}}%
\def\gamma{{\Greekmath 010D}}%
\def\delta{{\Greekmath 010E}}%
\def\epsilon{{\Greekmath 010F}}%
\def\zeta{{\Greekmath 0110}}%
\def\eta{{\Greekmath 0111}}%
\def\theta{{\Greekmath 0112}}%
\def\iota{{\Greekmath 0113}}%
\def\kappa{{\Greekmath 0114}}%
\def\lambda{{\Greekmath 0115}}%
\def\mu{{\Greekmath 0116}}%
\def\nu{{\Greekmath 0117}}%
\def\xi{{\Greekmath 0118}}%
\def\pi{{\Greekmath 0119}}%
\def\rho{{\Greekmath 011A}}%
\def\sigma{{\Greekmath 011B}}%
\def\tau{{\Greekmath 011C}}%
\def\upsilon{{\Greekmath 011D}}%
\def\phi{{\Greekmath 011E}}%
\def\chi{{\Greekmath 011F}}%
\def\psi{{\Greekmath 0120}}%
\def\omega{{\Greekmath 0121}}%
\def\varepsilon{{\Greekmath 0122}}%
\def\vartheta{{\Greekmath 0123}}%
\def\varpi{{\Greekmath 0124}}%
\def\varrho{{\Greekmath 0125}}%
\def\varsigma{{\Greekmath 0126}}%
\def\varphi{{\Greekmath 0127}}%
\def\nabla{{\Greekmath 0272}}
\def\FindBoldGroup{%
   {\setbox0=\hbox{$\mathbf{x\global\edef\theboldgroup{\the\mathgroup}}$}}%
}
\def\Greekmath#1#2#3#4{%
    \if@compatibility
        \ifnum\mathgroup=\symbold
           \mathchoice{\mbox{\boldmath$\displaystyle\mathchar"#1#2#3#4$}}%
                      {\mbox{\boldmath$\textstyle\mathchar"#1#2#3#4$}}%
                      {\mbox{\boldmath$\scriptstyle\mathchar"#1#2#3#4$}}%
                      {\mbox{\boldmath$\scriptscriptstyle\mathchar"#1#2#3#4$}}%
        \else
           \mathchar"#1#2#3#4%
        \fi 
    \else 
        \FindBoldGroup
        \ifnum\mathgroup=\theboldgroup 
           \mathchoice{\mbox{\boldmath$\displaystyle\mathchar"#1#2#3#4$}}%
                      {\mbox{\boldmath$\textstyle\mathchar"#1#2#3#4$}}%
                      {\mbox{\boldmath$\scriptstyle\mathchar"#1#2#3#4$}}%
                      {\mbox{\boldmath$\scriptscriptstyle\mathchar"#1#2#3#4$}}%
        \else
           \mathchar"#1#2#3#4%
        \fi     	    
	  \fi}
\newif\ifGreekBold  \GreekBoldfalse
\let\SAVEPBF=\pbf
\def\pbf{\GreekBoldtrue\SAVEPBF}%
  \newcounter{equationnumber}  
  \def\mathletters{%
     \addtocounter{equation}{1}
     \edef\@currentlabel{\theequation}%
     \setcounter{equationnumber}{\c@equation}
     \setcounter{equation}{0}%
     \edef\theequation{\@currentlabel\noexpand\alph{equation}}%
  }
    \def\BibTeX{{\rm B\kern-.05em{\sc i\kern-.025em b}\kern-.08em
                 T\kern-.1667em\lower.7ex\hbox{E}\kern-.125emX}}}{}%
\def\AmS{{\protect\usefont{OMS}{cmsy}{m}{n}%
                A\kern-.1667em\lower.5ex\hbox{M}\kern-.125emS}}}{}%
\def\DN@{\def\next@}%
\def\eat@#1{}%
\let\DOTSI\relax
\def\RIfM@{\relax\ifmmode}%
\def\FN@{\futurelet\next}%
\def\iint{\DOTSI\intno@\tw@\FN@\ints@}%
\def\iiint{\DOTSI\intno@\thr@@\FN@\ints@}%
\def\iiiint{\DOTSI\intno@4 \FN@\ints@}%
\def\idotsint{\DOTSI\intno@\z@\FN@\ints@}%
\def\ints@{\findlimits@\ints@@}%
\newif\iflimtoken@
\newif\iflimits@
\def\findlimits@{\limtoken@true\ifx\next\limits\limits@true
 \else\ifx\next\nolimits\limits@false\else
 \limtoken@false\ifx\ilimits@\nolimits\limits@false\else
 \ifinner\limits@false\else\limits@true\fi\fi\fi\fi}%
\def\multint@{\int\ifnum\intno@=\z@\intdots@                          
 \else\intkern@\fi                                                    
 \ifnum\intno@>\tw@\int\intkern@\fi                                   
 \ifnum\intno@>\thr@@\int\intkern@\fi                                 
 \int}
\def\multintlimits@{\intop\ifnum\intno@=\z@\intdots@\else\intkern@\fi
 \ifnum\intno@>\tw@\intop\intkern@\fi
 \ifnum\intno@>\thr@@\intop\intkern@\fi\intop}%
\def\intic@{%
    \mathchoice{\hskip.5em}{\hskip.4em}{\hskip.4em}{\hskip.4em}}%
\def\negintic@{\mathchoice
 {\hskip-.5em}{\hskip-.4em}{\hskip-.4em}{\hskip-.4em}}%
\def\ints@@{\iflimtoken@                                              
 \def\ints@@@{\iflimits@\negintic@
   \mathop{\intic@\multintlimits@}\limits                             
  \else\multint@\nolimits\fi                                          
  \eat@}
 \else                                                                
 \def\ints@@@{\iflimits@\negintic@
  \mathop{\intic@\multintlimits@}\limits\else
  \multint@\nolimits\fi}\fi\ints@@@}%
\def\intkern@{\mathchoice{\!\!\!}{\!\!}{\!\!}{\!\!}}%
\def\plaincdots@{\mathinner{\cdotp\cdotp\cdotp}}%
\def\intdots@{\mathchoice{\plaincdots@}%
 {{\cdotp}\mkern1.5mu{\cdotp}\mkern1.5mu{\cdotp}}%
 {{\cdotp}\mkern1mu{\cdotp}\mkern1mu{\cdotp}}%
 {{\cdotp}\mkern1mu{\cdotp}\mkern1mu{\cdotp}}}%
\def\RIfM@{\relax\protect\ifmmode}
\def\text{\RIfM@\expandafter\text@\else\expandafter\mbox\fi}
\let\nfss@text\text
\def\text@#1{\mathchoice
   {\textdef@\displaystyle\f@size{#1}}%
   {\textdef@\textstyle\tf@size{\firstchoice@false #1}}%
   {\textdef@\textstyle\sf@size{\firstchoice@false #1}}%
   {\textdef@\textstyle \ssf@size{\firstchoice@false #1}}%
   \glb@settings}
\def\textdef@#1#2#3{\hbox{{%
                    \everymath{#1}%
                    \let\f@size#2\selectfont
                    #3}}}
\newif\iffirstchoice@
\def\Let@{\relax\iffalse{\fi\let\\=\cr\iffalse}\fi}%
\def\vspace@{\def\vspace##1{\crcr\noalign{\vskip##1\relax}}}%
\def\multilimits@{\bgroup\vspace@\Let@
 \baselineskip\fontdimen10 \scriptfont\tw@
 \advance\baselineskip\fontdimen12 \scriptfont\tw@
 \lineskip\thr@@\fontdimen8 \scriptfont\thr@@
 \lineskiplimit\lineskip
 \vbox\bgroup\ialign\bgroup\hfil$\m@th\scriptstyle{##}$\hfil\crcr}%
\def\Sb{_\multilimits@}%
\def\endSb{\crcr\egroup\egroup\egroup}%
\def\Sp{^\multilimits@}%
\newdimen\ex@
\def\rightarrowfill@#1{$#1\m@th\mathord-\mkern-6mu\cleaders
 \hbox{$#1\mkern-2mu\mathord-\mkern-2mu$}\hfill
 \mkern-6mu\mathord\rightarrow$}%
\def\leftarrowfill@#1{$#1\m@th\mathord\leftarrow\mkern-6mu\cleaders
 \hbox{$#1\mkern-2mu\mathord-\mkern-2mu$}\hfill\mkern-6mu\mathord-$}%
\def\leftrightarrowfill@#1{$#1\m@th\mathord\leftarrow
\mkern-6mu\cleaders
 \hbox{$#1\mkern-2mu\mathord-\mkern-2mu$}\hfill
 \mkern-6mu\mathord\rightarrow$}%
\def\overrightarrow{\mathpalette\overrightarrow@}%
\def\overrightarrow@#1#2{\vbox{\ialign{##\crcr\rightarrowfill@#1\crcr
 \noalign{\kern-\ex@\nointerlineskip}$\m@th\hfil#1#2\hfil$\crcr}}}%
\def\overleftarrow{\mathpalette\overleftarrow@}%
\def\overleftarrow@#1#2{\vbox{\ialign{##\crcr\leftarrowfill@#1\crcr
 \noalign{\kern-\ex@\nointerlineskip}$\m@th\hfil#1#2\hfil$\crcr}}}%
\def\overleftrightarrow{\mathpalette\overleftrightarrow@}%
\def\overleftrightarrow@#1#2{\vbox{\ialign{##\crcr
   \leftrightarrowfill@#1\crcr
 \noalign{\kern-\ex@\nointerlineskip}$\m@th\hfil#1#2\hfil$\crcr}}}%
\def\underrightarrow{\mathpalette\underrightarrow@}%
\def\underrightarrow@#1#2{\vtop{\ialign{##\crcr$\m@th\hfil#1#2\hfil
  $\crcr\noalign{\nointerlineskip}\rightarrowfill@#1\crcr}}}%
\def\underleftarrow{\mathpalette\underleftarrow@}%
\def\underleftarrow@#1#2{\vtop{\ialign{##\crcr$\m@th\hfil#1#2\hfil
  $\crcr\noalign{\nointerlineskip}\leftarrowfill@#1\crcr}}}%
\def\underleftrightarrow{\mathpalette\underleftrightarrow@}%
\def\underleftrightarrow@#1#2{\vtop{\ialign{##\crcr$\m@th
  \hfil#1#2\hfil$\crcr
 \noalign{\nointerlineskip}\leftrightarrowfill@#1\crcr}}}%
\def\qopnamewl@#1{\mathop{\operator@font#1}\nlimits@}
\let\nlimits@\displaylimits
\def\setboxz@h{\setbox\z@\hbox}
\def\varlim@#1#2{\mathop{\vtop{\ialign{##\crcr
 \hfil$#1\m@th\operator@font lim$\hfil\crcr
 \noalign{\nointerlineskip}#2#1\crcr
 \noalign{\nointerlineskip\kern-\ex@}\crcr}}}}
 \def\rightarrowfill@#1{\m@th\setboxz@h{$#1-$}\ht\z@\z@
  $#1\copy\z@\mkern-6mu\cleaders
  \hbox{$#1\mkern-2mu\box\z@\mkern-2mu$}\hfill
  \mkern-6mu\mathord\rightarrow$}
\def\leftarrowfill@#1{\m@th\setboxz@h{$#1-$}\ht\z@\z@
  $#1\mathord\leftarrow\mkern-6mu\cleaders
  \hbox{$#1\mkern-2mu\copy\z@\mkern-2mu$}\hfill
  \mkern-6mu\box\z@$}
\def\projlim{\qopnamewl@{proj\,lim}}
\def\injlim{\qopnamewl@{inj\,lim}}
\def\varinjlim{\mathpalette\varlim@\rightarrowfill@}
\def\varprojlim{\mathpalette\varlim@\leftarrowfill@}
\def\varliminf{\mathpalette\varliminf@{}}
\def\varliminf@#1{\mathop{\underline{\vrule\@depth.2\ex@\@width\z@
   \hbox{$#1\m@th\operator@font lim$}}}}
\def\varlimsup{\mathpalette\varlimsup@{}}
\def\varlimsup@#1{\mathop{\overline
  {\hbox{$#1\m@th\operator@font lim$}}}}
\def\align{\@verbatim \frenchspacing\@vobeyspaces \@alignverbatim
You are using the "align" environment in a style in which it is not defined.}
\let\csname endalign*\endcsname =\endtrivlist
\def\alignat{\@verbatim \frenchspacing\@vobeyspaces \@alignatverbatim
You are using the "alignat" environment in a style in which it is not defined.}
\let\csname endalignat*\endcsname =\endtrivlist
\def\xalignat{\@verbatim \frenchspacing\@vobeyspaces \@xalignatverbatim
You are using the "xalignat" environment in a style in which it is not defined.}
\let\csname endxalignat*\endcsname =\endtrivlist
\def\gather{\@verbatim \frenchspacing\@vobeyspaces \@gatherverbatim
You are using the "gather" environment in a style in which it is not defined.}
\let\csname endgather*\endcsname =\endtrivlist
\def\multiline{\@verbatim \frenchspacing\@vobeyspaces \@multilineverbatim
You are using the "multiline" environment in a style in which it is not defined.}
\let\csname endmultiline*\endcsname =\endtrivlist
\def\arrax{\@verbatim \frenchspacing\@vobeyspaces \@arraxverbatim
You are using a type of "array" construct that is only allowed in AmS-LaTeX.}
\def\tabulax{\@verbatim \frenchspacing\@vobeyspaces \@tabulaxverbatim
You are using a type of "tabular" construct that is only allowed in AmS-LaTeX.}
\let\csname endarrax*\endcsname =\endtrivlist
\let\csname endtabulax*\endcsname =\endtrivlist
\def\@@eqncr{\let\@tempa\relax
    \ifcase\@eqcnt \def\@tempa{& & &}\or \def\@tempa{& &}%
      \else \def\@tempa{&}\fi
     \@tempa
     \if@eqnsw
        \iftag@
           \@taggnum
        \else
           \@eqnnum\stepcounter{equation}%
        \fi
     \fi
     \global\tag@false
     \global\@eqnswtrue
     \global\@eqcnt\z@\cr}
 \def\endequation{%
     \ifmmode\ifinner 
      \iftag@
        \addtocounter{equation}{-1} 
        $\hfil
           \displaywidth\linewidth\@taggnum\egroup \endtrivlist
        \global\tag@false
        \global\@ignoretrue   
      \else
        $\hfil
           \displaywidth\linewidth\@eqnnum\egroup \endtrivlist
        \global\tag@false
        \global\@ignoretrue 
      \fi
     \else   
      \iftag@
        \addtocounter{equation}{-1} 
        \eqno \hbox{\@taggnum}
        \global\tag@false%
        $$\global\@ignoretrue
      \else
        \eqno \hbox{\@eqnnum}
        $$\global\@ignoretrue
      \fi
     \fi\fi
 } 
 \newif\iftag@ \tag@false
 \def\tag{\@ifnextchar*{\@tagstar}{\@tag}}
 \def\@tag#1{%
     \global\tag@true
     \global\def\@taggnum{(#1)}}
 \def\@tagstar*#1{%
     \global\tag@true
     \global\def\@taggnum{#1}%
}
\begin{document}

\title{Learning, Diversity and Adaptation in Changing Environments: The Role
of Weak Links\thanks{%
We gratefully acknowledge financial support from the Hewlett Foundation, MIT
Quest, and the Mathworks Engineering Fellowship. We thank Ben Golub and Alex
Wolitzky for suggestions.}}
\author{ Daron Acemoglu \thanks{%
Department of Economics, Massachusetts Institute of Technology. Email: 
\texttt{daron@mit.edu}.} \and Asuman Ozdaglar\thanks{%
Department of Electrical Engineering and Computer Science, Massachusetts
Institute of Technology. Email: \texttt{asuman@mit.edu}.} \and Sarath
Pattathil\thanks{%
Department of Electrical Engineering and Computer Science, Massachusetts
Institute of Technology. Email: \texttt{sarathp@mit.edu}.} }

\date{}
\maketitle

\begin{abstract}
Adaptation to dynamic conditions requires a certain degree of diversity. If
all agents take the best current action, learning that the underlying state
has changed and behavior should adapt will be slower. Diversity is harder to
maintain when there is fast communication between agents, because they tend
to find out and pursue the best action rapidly. We explore these issues
using a model of (Bayesian) learning over a social network. Agents learn
rapidly from and may also have incentives to coordinate with others to whom
they are connected via strong links. We show, however, that when the
underlying environment changes sufficiently rapidly, any network consisting
of just strong links will do only a little better than random choice in the
long run. In contrast, networks combining strong and weak links, whereby the
latter type of links transmit information only slowly, can achieve much
higher long-run average payoffs. The best social networks are those that
combine a large fraction of agents into a strongly-connected component,
while still maintaining a sufficient number of smaller communities that make
diverse choices and communicate with this component via weak links.

\textbf{Keywords:} adaptation, Bayesian learning, changing environments,
diversity, networks, strong links, weak links.

\textbf{JEL Classification: }D83, D85.
\end{abstract}

\bigskip \thispagestyle{empty}

\bigskip

\bigskip

\bigskip

\bigskip

\bigskip

\bigskip

\bigskip

\thispagestyle{empty}\bigskip \restoregeometry

\newpage

\setcounter{page}{1}

\section{Introduction\label{sec:intro}}

Rising threats from economic disruptions, climate change, new pandemics and
resurgent nationalism and other extremist ideologies have rekindled interest
in understanding what makes societies resilient against challenges (\cite%
{Keck, brunnermeier2022resilient}). A large body of literature in ecology
and biology, starting with the influential work of \cite{fisher1958genetical}%
, suggests that diversity is critical for resilience in the face of changing
circumstances. Species that lack diversity may be well-suited for a given
environment, but then have a hard time adapting to sizable changes. \ In
this paper, we explore the relationship between diversity and adaptation in
a social context. Our focus is on one facet of this problem: learning about
and adapting to a changing environment.

\subsection{Main Argument}

We envisage a set of agents that interact with each other and choose between
two actions, one of which has higher payoffs. Local interactions and
knowledge flows create a force towards choosing the same action as one's
neighbors. However, when all or most agents choose the same action, even if
this has currently the higher payoff, learning about dynamically-improving
alternatives becomes more difficult and may reduce long-run payoffs.

Formally, we study the Bayesian-Nash equilibrium of a simple game in which
each agent has a utility consisting of a \emph{material payoff}, which
depends on whether her action matches the underlying state, and a \emph{%
network payoff}, which depends on how well her action matches the actions of
her closely-associated neighbors. Crucially, the underlying state changes
over time according to a Markov chain, necessitating adaptation to this
evolving environment. We assume that agents use Bayesian updating to form
their beliefs about the underlying state, but in our baseline model simply
maximize their current utility, and hence have no reason to experiment.
(These results are then extended to the case in which agents maximize their
discounted utility).

The need for adaptation in our model creates a network version of the
classic exploitation-exploration trade-off: how much should some agents
deviate from what is best and experiment to see whether the underlying
environment has changed? Differently from standard experimentation problems,
however, here the network structure is critical. If agents are closely
linked together, they tend to play the same action, both because of local
information flows and because of local interactions. But in the case where
all agents play the best action today, there is an adaptation problem: when
the environment changes and there is a need for adaptation---switching to
the now-higher payoff action---such a change does not take place or does so
very very slowly. If, on the other hand, the social network has several
disconnected components, some of which are playing diverse actions, the
society as a whole will discover a change in environment rapidly, but this
information will not be transmitted from a local community to the rest of
society, because of the disconnected nature of the social graph.

Our main argument in this paper is that Granovetter's idea of
\textquotedblleft weak links\textquotedblright\ (\cite%
{granovetter1977strength}), which do not have the same frequency of close
interaction but can act as occasional conduits of information, provides a
powerful solution to this problem. Building on this idea, we model weak
links as intermittently transmitting information about behavior and payoffs,
but without inducing locally-uniform actions. We prove that a society
consisting of several clusters that are strongly disconnected but weakly
linked can achieve fast adaptation to changing circumstances, while ensuring
that most agents take the high-payoff action most of the time.

We also characterize the best social network from the viewpoint of
maximizing average long-run payoff. A version of the star network turns out
to be the one that achieves the highest average long-run payoff. This star
network involves a large number of agents strongly-linked clustered in a
star-like node in the middle, with a sufficient number of small communities
that are the weakly-connected leaves of this star node. The leaves do the
experimentation and ensure that society as a whole quickly learns when the
underlying environment changes. The star-like community in the middle
exploits both the gains from local interactions and the information
benefits, which it quickly acquires from the leaves.

We confirm that weak links are critical for this result by showing that
without weak links average long-run payoffs are only a little better than
random play, because most agents stay stuck with actions that were once good
but\ have since ceased to be so. In such networks, adaptation to changing
environment comes only from slow mutations/mistakes, and under our
assumptions, the rate of such switches is much slower than the rate at which
the environment changes. This result clarifies that it is the presence of
weak links, with an appropriate topology of strong connections, that ensures
that society achieves approximately the highest possible payoff.

Our basic analysis is for the case in which agents choose the action that
maximizes their current payoff. We additionally show that our results extend
to the case in which agents maximize discounted payoffs, provided that their
discount rate is not too high. Specifically, we derive a bound on this
discount rate such that below this bound, all of our main results continue
to hold, and in particular, without weak links, average long-run payoffs are
approximately as good as random play, while star-like networks can achieve
much higher long-run average payoffs. This bound depends on the strength of
local interactions and the maximum degree of the network.

\subsection{Broader Context}

We view our results as relating not just to the game theory and economics
literatures, but also to the broader literature on diversity and adaptation.
The theme of diversity is central in biology, but without the key issue that
arises in social systems: incentivizing agents to take actions that will
preserve diversity.

The adaptation benefits of diversity receive support from studies of several
different species. For example, \cite{agha2018adaptation} demonstrate
experimentally that cyanobacteria are much more vulnerable to a fungal
parasite when they are homogeneous. In fact, in host populations that are
kept homogeneous, parasites can spread very rapidly, whereas genetically
diverse host populations can resist the parasite much more successfully,
because they contain genes that are less vulnerable to the specific parasite
and these genes multiply faster in response to invasion. Similar benefits of
diversity are observed among bees in response to fluctuations in
temperature, as shown in \cite{fischer2004bee}. Each individual bee's
temperature thresholds for huddling and fanning are tied to a genetically
linked trait. Hives that lack genetic diversity in this trait experience
unusually large fluctuations in internal temperatures whereas hives with
genetic diversity produce much more stable internal temperatures. Thus, the
genetic diversity of the bees leads to relatively stable temperatures that
ultimately improve the health of the hive.

Even in biological systems, maintaining diversity is a major challenge. One
of the most widely-held theories of the benefits of sexual reproduction is
precisely that it ensures sufficient diversity within both organisms and
populations by mixing alleles from the two parents (see for example \cite%
{Weismann, barton1998sex, burt2000perspective}). As a result, sexual
reproduction enables greater fitness via adaptation to changing environments
relative to asexual reproduction. Experiments on yeast provide evidence for
this hypothesis. In particular, \cite{Goddard} genetically modified a strain
to create two strands of yeast that are identical, except for the way they
reproduce, and confirmed that the sexually-reproduced strand was much more
adaptable to harsher environments than the one that reproduces asexually.

Similar adaptation benefits of diversity have been hypothesized in social
settings and sometimes documented. \cite{granovetter1977strength,
Grano_network, granovetter2017society} have argued that new superior
technologies spread rapidly in tech clusters, such as Silicon Valley, via
weak links, that were created either by communication between employees or
managers of different companies or directly by workers moving between
companies (see \cite{saxenian1996regional}, on this pattern in Silicon
Valley, and \cite{jacobs2016economy}, for a more general emphasis on this
aspect of communication in urban environments).\footnote{\cite%
{rajkumar2022causal} confirm using Linkdin data that weak links are still
central for job finding. They emphasize that there is an \textquotedblleft
inverted U-shaped relationship between the weak tie strength and job
transmission such that weaker ties increased job transmission but only to a
point, after which there were diminishing marginal returns to tie
weakness.\textquotedblright\ This is in line with the results in our
Proposition \ref{prop:island_weak}.} Other studies emphasize the importance
of agents that bridge \textquotedblleft structural holes\textquotedblright\
between different parts of a community (\cite{burt1992structural}). This
perspective also provides a reinterpretation of the concerns articulated by
Robert Putnam (\cite{putnam2000bowling}) due to the declining importance of
diverse organizations, such as bowling alleys, sports clubs and local
religious organizations, which can provide the type of weak link that bridge
structural holes and communicate information between distinct social groups
that otherwise seldom interact. Our context also emphasizes that it is
particularly important that this takes place without creating the powerful
tendency towards homogeneity that strong links tend to induce. \ 

\subsection{Economics and Game Theory Literatures}

Within the economics and game theory literatures, our paper is related to a
number of distinct literatures. The first is a small literature on
adaptation and diversity. \cite{gross1996alternative} studies the reasons
why there is large nontypical variation within species and links this to
adaptation. More closely related is \cite{santos2008social} who analyze the
role of diversity in public good games and argue that diversity promotes
cooperation. The general presumption in much of economics is that diversity
in modern societies is conducive to conflict (e.g., see the survey in \cite%
{la2006racial}), though a few papers, such as \cite{montalvo2021ethnic},
document various economic benefits from diversity as well.

Several papers in economics study learning dynamics over social networks.
Our work is most directly related to the branch that focuses on Bayesian
models, such as \cite{gale2003bayesian}, \cite{banerjee2004word}, \cite%
{smith2008rational}, \cite{callander2009wisdom}, and \cite%
{acemoglu2011bayesian}. In addition, several papers, most notably \cite%
{bala1998learning, bala2001conformism}, \cite{demarzo2003persuasion} and 
\cite{golub2010naive} discuss non-Bayesian learning over social networks.
None of these papers consider the problem of adaptation to changing
environments, though the issue of balancing conformity from strong linkages
vs. sufficient incentives for agents to take different actions comes up in 
\cite{smith2008rational} and \cite{acemoglu2011bayesian}. More closely
related are a few recent papers that consider the speed of learning in
related problems. For example, \cite{acemoglu2022learning} characterize the
speed of learning with Bayesian agents observing different samples of past
online reviews, and we refer the reader to their paper for a discussion of
speed of learning results in the literature.

Even more closely related to our work are a few papers studying learning
when the underlying state is changing. \cite{moscarini1998social} observe
that, unless the underlying state is `sufficiently persistent', there cannot
be (Bayesian) cascades on a single action. \cite{frongillo2011social}
consider various non-Bayesian learning rules and show that they converge to
a steady-state distribution on complete graphs, despite the changing
environment. \cite{dasaratha2018learning} study a learning model where
individuals learn from others and their own private signals, and show that
learning is improved when private signals are diverse, which has a related
logic to our main results. Finally, \cite{levy2022stationary} is also
closely related, as they note that, with symmetric agents, all players
rapidly converge to the same (consensus) action, even after the underlying
state changes. None of these papers, nor any others that we are aware of,
study Bayesian learning under a general network and a changing state;
characterize which types of networks lead to better learning performance; or
model and observe the importance of weak links.

The structure of our model is also connected to the literature on
evolutionary or learning dynamics and equilibrium selection. Within this
literature, the pioneering work by \cite{kandori1993learning} consider an
evolutionary model with a finite number of agents randomly matching and
playing a two-player coordination game, subject to noise or mutations. They
show that the presence of noise reduces the range of long-run
\textquotedblleft equilibria\textquotedblright\ (stable configurations), and
in particular, in a $2\times 2$ game, evolutionary dynamics lead towards the
Pareto dominant Nash equilibrium. In related work, \cite{young1993evolution}
characterizes the stochastically stable equilibria in a large finite
population game subject to random matching and noise. As in \cite%
{kandori1993learning}, noise acts as an equilibrium selection device. \cite%
{ellison1993learning} points out that equilibrium selection in\ \cite%
{kandori1993learning} and \cite{young1993evolution} is very slow and
suggests that local matching---rather than random matching---leads to
significantly faster convergence. There are several important differences
between our work and this literature. First, to the best of our knowledge,
issues of adaptation to a changing environment or the role of diversity are
not studied in this literature. Instead, this literature's focus has been on
equilibrium selection in games with multiple equilibria. Second, rather than
evolutionary rules or rule-of-thumb behaviors, we focus on Bayesian-Nash
equilibria of a game with a changing underlying state.

Finally, some of the mathematical methods we use are common with the
literature on general belief dynamics. \cite{holley1975ergodic}, for
example, study the so-called \textquotedblleft voter model\textquotedblright
, which is similar to the evolutionary dynamics in \cite{kandori1993learning}
and \cite{young1993evolution} based on random matching (whereby influence
flows within the randomly-matched pair). In contrast, the stochastic
dynamics that emerge from our model is more similar to the \textquotedblleft
majority dynamics\textquotedblright\ studied in \cite{kanoria2011majority}
and \cite{yildiz2010voting}.

\subsection{Rest of the Paper}

The rest of the paper is organized as follows. In Section \ref{sec:model} we
introduce the model and define Bayesian-Nash equilibria. In Section \ref%
{sec:equi_char} we characterize the equilibria and provide a method to
analyze it for general networks. In Section \ref{sec:under_res}, we compare
networks with strong and weak links, and analyze networks which provide the
highest welfare. In Section \ref{sec:forward_looking}, we extend these
results to forward looking agents, and finally we provide a discussion of
our results in Section \ref{sec:discussion}.

\section{Model}

\label{sec:model}

In this section we introduce the basic environment, describe the network
formed by strong and weak links, payoffs, average welfare, and define
Bayesian-Nash equilibria.

\subsection{Network}

We consider a set of agents $V=\{1,2,\cdots ,n\}$ represented by nodes in an
undirected graph $G$. There are two kinds of links, \emph{strong }and \emph{%
weak}. We represent strong links with the symmetric matrix $S^{G}\in
\{0,1\}^{n\times n}$, with the convention that 
\begin{equation*}
S_{ij}^{G}=%
\begin{cases}
1 & \text{if agents $i$ and $j$ have a strong link between them} \\ 
0 & \text{otherwise}%
\end{cases}%
\end{equation*}%
The neighborhood of an agent $i$ is defined with respect to strong links, as 
${N}^{G}(i)=\{j\in V:S_{ij}=1\}$. The maximum degree of the network is
denoted by $d_{\max}^{G}=\max_{i\in V}|{N}^{G}(i)|$.

Weak links, on the other hand, are described by the symmetric matrix $%
W^{G}\in \{0,1\}^{n\times n}$, where similarly%
\begin{equation*}
W_{ij}^{G}=%
\begin{cases}
1 & \text{if agents $i$ and $j$ have a weak link between them} \\ 
0 & \text{otherwise}%
\end{cases}%
\end{equation*}%
We also let $\mathcal{E}^{G}_{s}$ and $\mathcal{E}^{G}_{w}$ denote the set
of strong and weak links respectively. Whenever this will cause no
confusion, we drop the superscript $G$.

\subsection{Actions and Rewards}

Time is continuous and runs to infinity. At each time $t\in \lbrack 0,\infty
)$, agent $i\in V$ chooses an action $a_{i}(t)\in $ $\mathcal{A}=\{0,1\}$.
The agent's resulting payoff is the sum of two components:

\begin{enumerate}
\item a \emph{material payoff} $R_{a}(t)$, which only depends on the action $%
a$ taken by the agent and the underlying state of nature (and is thus
stochastic);

\item a \emph{network payoff}, which depends on actions in the agent's
neighborhood as we describe in Section \ref{subsec:agent_dyna}.
\end{enumerate}

The need for adaptation arises because the underlying state and thus the
material payoffs from the two actions, $R_{0}(t)$ and $R_{1}(t)$, change
over time. We assume that these changes arrive according to a Poisson clock
of rate $\lambda $, and denote the (random) instances at which such changes
take place by $\{T_{k}\}_{k=0}^{\infty }$ and and refer to them as \textit{%
payoff shocks} (and we set $T_{0}=0$). Without loss of any generality, we
assume that following the realization of the Poisson clock at time $T_{k}$,
rewards change at $T_{k}^{+}$, that is, right after $T_{k}$. This implies
that the rewards from an action are constant over $(T_{k},T_{k+1}]$ for all $%
k$. We also simplify our analysis by assuming that the gap between the two
actions, $R_{0}(t)$ and $R_{1}(t)$, is constant and normalize it to 1,
though, crucially, which action has higher payoff naturally changes with the
realizations of the Poisson clock. We additionally define $A(t)$ as the
action with the higher reward at time $t$, and denote by $%
\{A_{k}\}_{k=1}^{\infty }$ the action with the higher reward in the time
interval $(T_{k-1},T_{k}]$.

Summarizing this reward structure, we can write that for all (random) time
instances $\{T_{k}\}_{k=0}^{\infty }$, we have%
\begin{equation*}
R_{0}(T_{k}^{+})-R_{1}(T_{k}^{+})=%
\begin{cases}
+1 & \text{w.p. }1/2 \\ 
-1 & \text{w.p. }1/2%
\end{cases}%
\end{equation*}%
with $R_{0}(t)-R_{1}(t)=R_{0}(T_{k}^{+})-R_{1}(T_{k}^{+})\ $for all$\ t\in
(T_{k},T_{k+1}]$.

Note also that the case where $\lambda =0$ yields the special case where
material payoffs are constant and known. We assume that all agents are
initialized (at time $t = 0$) to play Action 0.

\subsection{Information Structure\label{subsec:strong_weak_links}}

We next describe the information structure, which depends on the nature of
strong and weak links.

\textbf{Strong Links: }At all times $t$, each agent $i$ will have complete
information about the action history and associated payoffs from its
strongly-linked neighbors in the set $\mathcal{N}(i)$.

\textbf{Weak Links:} In contrast to strong links, weak links transmit
information slowly. We model this by assuming that weak links start as
\textquotedblleft dormant \textquotedblright\ and are activated
stochastically. Specifically, there is a Poisson clock of rate $\gamma $,
and each time the clock ticks, one dormant weak link is activated.
Furthermore, once a weak link is activated, it transmits information, and
then goes to an \textquotedblleft inactive\textquotedblright\ state until
another independent Poisson clock, this time of rate $\phi $, turns it back
to \textquotedblleft dormant \textquotedblright . We explain below the
reasoning for this two-stage activation. We first explain how the activated
weak link is chosen from the set of all weak links.

Let $\mathcal{W}(t)=\left\{ (i,j)\in \mathcal{E}_{w}^{G}:a_{i}(t)\neq
a_{j}(t),(i,j)\text{ is dormant}\right\} $. In other words, this is the set
of weak links that are dormant and also involve two linked agents playing
different actions at time $t$. This is the set of weak links that are
relevant for information transmission---since there is no relevant
information to be transmitted between agents that are playing the same
action. We\ assume that, once the relevant Poisson clock clicks, a link is
chosen uniformly at random from $\mathcal{W}(t)$. Once this happens, the
link becomes active, and information transmission happens through this link,
i.e., if the link that is fully activated is $(i,j)$, then the current
action and payoff of individual $i$ is transmitted to $j$, and symmetrically
information from $j$ is observed by $i$. Once this information has been
transmitted, the link enters an inactive state, in which it stays till the
Poisson clock of rate $\phi$ clicks, after which it becomes dormant again.

A couple of comments are useful at this point. First, information
transmission on weak links is slow, in contrast to the very fast
transmission on strong links. While strong links capture frequent
interactions, such as between family members, coworkers or closely-connected
agents, weak links transmit information occasionally via gossip or random
observation. In terms of our mathematical formulation, a weak link transmits
information only after moving from inactive to the dormant state, and then
waiting to become active. This slow transmission plays a key role in our
results, as we will see. Second, the fact that activated weak links are
among those connecting agents playing different actions is consistent with
the idea that weak links become active for gossip or information exchange.
The main reason this assumption is imposed in our setting is for simplicity:
without this assumption, some of the weak links that are activated would not
transmit relevant information, and although this does not affect our general
results, having activated links that do not transmit useful information
makes the coupling arguments we use for the proofs more difficult. Third,
the two-stage activation is important to ensure sufficient slowness in
information transmission. In particular, if there was no inactive state, it
might be the case than the same weak link could be chosen multiple times
(since weak links are selected from those playing different actions) while
other weak links are never activated. With our two-stage activation, we
ensure that once a weak link transmits relevant information, it moves to an
inactive state, where it is unable to transmit any information for a certain
\textquotedblleft backoff" period, dictated by the Poisson clock of rate $%
\phi $, after which it becomes dormant, where it is a contender to become a
conduit of information. In this formulation, $\phi \rightarrow \infty $
corresponds to the case where weak links are never in the inactive state,
whereas $\phi \rightarrow 0$ corresponds to the case where after a weak link
is activated to transmit information, it enters an inactive state forever
and will never again transmit information.

\subsection{Overall Payoffs and Beliefs}

\label{subsec:agent_dyna}

Agents maximize their static, current payoffs (until Section \ref%
{sec:forward_looking}, where we introduce forward-looking behavior). As
noted above, the overall per-period utility of an agent $i$ taking action $%
a_{i}$ at time $t$ is given by 
\begin{equation*}
\mathcal{U}_{i}^{a_{i}}(t)=R_{a_{i}}(t)+\tau f_{i}(a,t),
\end{equation*}%
where $R_{a}(t)$ is this agent's \emph{material payoff}, as specified above,
while\emph{\ }$\tau f_{i}(a,t)$ is her \emph{network payoff}, with $%
a=[a_{1},a_{2},\cdots $ $\cdots ,a_{N}]$ denoting the entire action profile
of this population (though what matters will be the actions of agent $i$'s
neighbors). Specifically, we equate this network payoff with the the number
of agent $i$'s neighbors playing action $a_{i}$ at time $t$. That is,%
\begin{equation*}
f_{i}(a,t)=\sum_{j\in {N}^{G}(i)}\mathbb{I}_{a_{j}(t)=a_{i}},
\end{equation*}%
where $\mathbb{I}_{a_{j}(t)=a_{i}}$ is the indicator function for neighbor $%
j $ of agent $i$ taking the same action $a_{i}$ is this agent at time $t$.
Intuitively, this term captures the payoff benefits from coordinating with
closely connected agents. The parameter $\tau \geq 0$ designates the
importance of this local network payoff.\footnote{%
All of our results in this section remain valid when $\tau =0$, so that
there is no network payoff, but such local payoff interactions become
important in the forward-looking case, analyzed in Section \ref%
{sec:forward_looking}.}

While the network payoff is deterministic (given an action profile of other
agents), the material payoff is stochastic and depends on the underlying
state, as specified above. Hence, agent best responses will depend on their
beliefs, which we next describe.

Let $\mu _{i}(t)$ denotes the belief of agent $i\in V$ that Action $1$ has
higher reward at time $t$, i.e., $R_{1}(t)-R_{0}(t)=+1$. More formally, 
\begin{equation*}
\mu _{i}(t)=\mathbb{E}_{i,t}[\mathbb{I}_{R_{1}(t)>R_{0}(t)}],
\end{equation*}%
where $\mathbb{E}_{i,t}$ denotes expectations according to the {information
set} of agent $i$ at time $t$, and $\mathbb{I}_{R_{1}(t)>R_{0}(t)}$ is the
indicator function for $R_{1}(t)>R_{0}(t)$. We assume that for all agents $i
\in V$, we have $\mu_i(0) = 1/2$, i.e., the agents have no information at
time $t =0$ about which action has the higher material payoff.

The assumption that agents maximize their current payoffs implies that%
\begin{equation}
a_{i}(t)=\underset{a\in \{0,1\}}{\func{argmax}}\ \mathbb{E}_{i,t}[\ \mathcal{%
U}_{i}^{a}(t)\ ].  \label{eq:action_choice}
\end{equation}

Finally, as in \cite{kandori1993learning} and \cite{young1993evolution}, we
introduce individual trembles. We assume that another Poisson clock of rate $%
\epsilon >0$ induces change in behavior. In particular, each time this clock
ticks one agent is picked uniformly at random and she ends up taking the
opposite action to the one she intended. \ We refer to this phenomenon as an 
\textit{$\epsilon $-tremble}. Throughout, we will take $\epsilon $ to be
small, and in fact much smaller than the rate at which the underlying state
changes ($\lambda $) and weak links transmit information ($\gamma $).

\subsection{Bayesian-Nash Equilibrium}

We focus on the Bayesian-Nash equilibria of this game. A Bayesian-Nash
equilibrium (BNE) is defined in a standard fashion.

\begin{definition}[Bayesian-Nash Equilibrium]
\label{def:BNE} An action profile $a=[a_{1},a_{2},\cdots $ $\cdots ,a_{N}]$
where $a_{i}\in \{0,1\}$ is a pure strategy BNE if for each $i$, $a_{i}$
maximizes the expected payoff in equation \eqref{eq:action_choice}, given
the action profile of other agents $a_{-i}$, with the expectation in %
\eqref{eq:action_choice},$\ \mathbb{E}_{i,t}$, taken according to Bayes rule.%
%
%
%
%
%
%
%
%
\end{definition}


\subsection{Average Welfare}

We evaluate the adaptation success of different social networks by looking
at their long-run average payoff (in BNE). This measure is attractive
because only societies that rapidly respond to a changing environment can
achieve high long-run average payoffs.\footnote{%
If instead we focused on discounted average payoffs, this would down-weight
future failures to adapt to changes.} Formally, average payoffs in a society
comprised of $n$ agents at time instance $T_{k}$ is%
\begin{equation*}
\mathcal{S}_{k}^{{G}}=\frac{1}{n}\sum_{i=1}^{n}\mathbb{I}%
_{a_{i}(T_{k})=A_{k}},
\end{equation*}%
where $\{T_{k}\}_{k=0}^{\infty }$ are instances of payoff shock and the
indicator function $\mathbb{I}_{a_{i}(T_{k})=A_{k}}$ takes the value $1$
when $a_{i}(T_{k})=A_{k}$ and $0$ otherwise. We condition on the social
network designated by graph $G$. Long-run average welfare is then defined as:%
\begin{equation}
\mathcal{S}^{{G}}=\lim_{k\rightarrow \infty }\mathcal{S}_{k}^{{G}}.
\label{def:average_welfare}
\end{equation}%
%
%
%
%
%
%
%
%
A couple of points are worth noting. First, we focus only on instances of
payoff shock, since in between payoff shocks nodes are (potentially) in a
transient state, trying to learn through strong and weak links about which
action has the higher material payoff. Second, we could have equivalently
defined long-run average payoffs as the average across all time periods.
This alternative definition depends on initial actions, though the weight of
these initial actions goes to zero as the limit is taken. The current
definition simplifies the exposition without loss of generality.


\section{Equilibrium Characterization}

\label{sec:equi_char}

In this section, we characterize the BNE and then provide an expression for
average welfare in any BNE. Our characterization proceeds as follows. First,
we prove a monotonicity property of Bayesian beliefs, establishing that
belief dynamics before the next time of information arrival never reverse
direction and they jump to the correct probabilities at times of information
arrival. Using this characterization, we prove that an agent will only
change her action during times of information arrival. Combining this result
with the structure of strong links, we show that, except at times of
information arrival, strongly-linked components will always play the same
action in any BNE. In the last subsection of this section, we provide a
characterization of average welfare under this equilibrium structure, using
a suitably designed embedded Markov chain, defined over the action profiles
of agents in the social network.

\subsection{Belief Dynamics}

The next definition introduces the (set of) times of information arrival.
Intuitively, these are time instances for an agent $i$ during which the
agent receives \textquotedblleft new information\textquotedblright . This
can happen because a weak link adjacent to this agent is activated, or a
strongly-linked neighbor changes her behavior, or the agent herself has an $%
\epsilon $-tremble. Formally:

\begin{definition}[Last time of new information]
\label{def:Last_New_Info_Time} Instance $t$ is a time of information arrival
for agent $i$ if one of the following take place at time $t$:\ 

\begin{itemize}
\item A weak link adjacent to agent $i$ is activated.

\item For some $j\in N(i)$, we have $a_{i}(t)\neq a_{j}(t)$.

\item Agent $i$ has an $\epsilon $-tremble.
\end{itemize}

Times of information arrival for agent $i$ are then defined as%
\begin{equation*}
\mathcal{T}_{i}=\{t:\text{ $t$ is a time of information arrival for agent $i$
}\},
\end{equation*}%
and the last instance of information arrival before $t$ is%
\begin{equation*}
T_{i}(t)=\sup \{t_{i}:t_{i}\in \mathcal{T}_{i},\text{ and }t_{i}\leq t\}.
\end{equation*}
\end{definition}

We remind the reader that, given the structure of information specified so
far, all instances of information arrival are fully-revealing about which
action has the higher (material) reward. Hence, \ agent $i$'s belief that
action $1$ is the better action at a time of information arrival is either 0
or 1.

We also note that agent $i$'s information set at time $t$, denoted by $%
\mathcal{I}_{i}(t)$, is fully summarized by the last instance of information
arrival before time $t$, $T_{i}(t)$, and the action profile observed by the
agent at this point. \ Recall that $\mu _{i}(t)$ is agent $i$'s belief that
action $1$ has greater material payoff at time $t$ than action $0$, and thus 
$\mu _{i}(t)=\mathbb{P}(A(t)=1\ |\ \mathcal{I}_{i}(t))$ (where also recall
that $A(t)$ denotes the action that has higher material payoff at time $t$,
which is common across all agents).

Using this notation, we can now establish a critical property of Bayesian
updates, which will enable us to characterize BNE.

\begin{lemma}
\label{lemma:bayes_mono} Bayesian beliefs at time $t$, $\mu _{i}(t)$,
satisfy the following monotonicity property:%
\begin{align}
\mu _{i}(T_{i}(t))=1& \implies \mu _{i}(t)>\frac{1}{2}  \notag \\
\mu _{i}(T_{i}(t))=0& \implies \mu _{i}(t)<\frac{1}{2}.  \notag
\end{align}
\end{lemma}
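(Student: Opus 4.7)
The plan is to compute the posterior $\mu_i(t)$ explicitly using the Poisson structure of payoff shocks and verify directly that the result lies strictly on the same side of $1/2$ as $\mu_i(T_i(t))$.

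First, I would invoke the fact, already noted in the excerpt, that every instance of information arrival is fully revealing, so $\mu_i(T_i(t)) \in \{0,1\}$. Set $\Delta := t - T_i(t) \geq 0$ and let $N$ denote the number of payoff shocks occurring in $(T_i(t),\,t]$. Since payoff shocks arrive according to an independent Poisson process of rate $\lambda$, we have $\mathbb{P}(N=0) = e^{-\lambda \Delta}$. Moreover, the shock specification---$R_0(T_k^+)-R_1(T_k^+)\in\{+1,-1\}$ with equal probability, independently across $k$---implies that conditional on $N\geq 1$ the better action $A(t)$ is uniform on $\{0,1\}$ and independent of $A(T_i(t))$, while conditional on $N=0$ we have $A(t)=A(T_i(t))$.

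Second, I would argue that agent $i$'s information accrued on $(T_i(t),t]$ consists only of the ``null'' event $E$ that no weak link adjacent to her activated, no strong neighbor switched action, and she herself did not tremble. Because the payoff-shock clock is independent of the clocks driving these three event types, and because the entire Bayesian-Nash system is symmetric under relabeling $0 \leftrightarrow 1$ (the priors $\mu_j(0)=1/2$ are symmetric and the shock distribution on $R_0-R_1$ is symmetric), the event $E$ has the same probability under $A(t)=0$ and $A(t)=1$ (conditional on $A(T_i(t))$). Hence conditioning on $E$ does not alter the posterior on $A(t)$, and I may compute $\mu_i(t)$ by conditioning only on $A(T_i(t))$ and the elapsed time $\Delta$.

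Combining these steps via the law of total probability, if $\mu_i(T_i(t))=1$ then
\begin{equation*}
\mu_i(t) \;=\; e^{-\lambda\Delta}\cdot 1 \;+\; (1-e^{-\lambda\Delta})\cdot \tfrac{1}{2} \;=\; \tfrac{1}{2}+\tfrac{1}{2}\,e^{-\lambda\Delta} \;>\; \tfrac{1}{2},
\end{equation*}
and symmetrically, if $\mu_i(T_i(t))=0$ then $\mu_i(t)=\tfrac{1}{2}-\tfrac{1}{2}e^{-\lambda\Delta}<\tfrac{1}{2}$. The main obstacle I anticipate is making the ``null observation is uninformative'' step fully rigorous: in principle the probability of $E$ could depend on $A(t)$ through the cascading effect of past shocks on other agents' strategies, so the argument must carefully exploit the label-symmetry of the BNE together with the independence of the payoff-shock Poisson clock from the weak-link, strong-link, and tremble clocks. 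I expect the cleanest way to discharge this is an induction along the ordered sequence of information-arrival events across the whole network, where each inductive step preserves the label-symmetry property used above.
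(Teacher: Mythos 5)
Your computation of the environment-driven part of the update is exactly the paper's: conditioning only on the elapsed time $\Delta = t - T_i(t)$ and the Poisson shock clock gives $e^{-\lambda\Delta}\cdot 1 + (1-e^{-\lambda\Delta})\cdot\tfrac12 = \tfrac12 + \tfrac12 e^{-\lambda\Delta} > \tfrac12$, which is the term the paper calls $\mu_i^e(t)$. The gap is in the step you yourself flag as the obstacle: the claim that the null event $E$ (no information arrival on $(T_i(t),t]$) is uninformative about $A(t)$, justified by a $0\leftrightarrow 1$ relabeling symmetry. That symmetry does not hold for the conditional probabilities you need. The agent's information set at $T_i(t)$ pins down which concrete action her neighborhood is currently playing (and the model is additionally asymmetric at $t=0$, since everyone is initialized to Action~$0$); relabeling maps $\mathbb{P}(E \mid A(t)=1,\ \text{neighborhood plays }1)$ to $\mathbb{P}(E \mid A(t)=0,\ \text{neighborhood plays }0)$, not to the quantity you must compare against. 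And the asymmetry is real, not an artifact: weak links activate only between agents playing \emph{different} actions, and a strongly-linked neighbor who explores and finds the other action better would switch, which is itself an information-arrival event for $i$. So conditional on the state having flipped, information arrival is strictly more likely, i.e.\ $\mathbb{P}(E\mid A(t)\neq A(T_i(t))) \leq \mathbb{P}(E\mid A(t)=A(T_i(t)))$. Consequently the proposed induction aimed at establishing exact symmetry cannot succeed, because the thing it would prove is false in general (it does hold in the paper's Example~1, where $\epsilon\to 0$ and all weak links fire simultaneously on a state-independent clock, which is why the exact formula appears there).

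The good news is that the error runs in the favorable direction, and repairing the proof means replacing the symmetry claim with the one-sided likelihood inequality above: ``no news is (weakly) good news.'' Then Bayes' rule gives $\mu_i(t) \geq \tfrac12 + \tfrac12 e^{-\lambda\Delta} > \tfrac12$ when $\mu_i(T_i(t))=1$ (and symmetrically for the other case), so your formula becomes a valid lower bound rather than an equality. This is precisely the structure of the paper's own argument, which decomposes the posterior into the environment term $\mu_i^e(\tau)\geq \tfrac12$ and a neighbor-interaction term that is argued only to reinforce the belief in the last-learned state; your write-up needs that reinforcement (or at least non-degradation) lemma made explicit in place of the symmetry step.
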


Lemma \ref{lemma:bayes_mono} states that once an agent becomes aware of the
action with the higher material payoff (which takes place following a time
of information arrival), her beliefs remain that this action is more likely
to be the higher-reward action until the next instance of information
arrival. Consequently, once an agent believes that, say, action $1$, is
better at time $T_{i}(t)$, then she will continue to believe that action $1$
is better than action $0$ ($\mu _{i}>1/2$) until she receives new
information.

While Lemma \ref{lemma:bayes_mono} establishes monotonicity of Bayesian
beliefs, it does not provide a full characterization of belief dynamics.
Such a characterization is difficult in general, though it can be obtained
in some special cases, as the next example shows. This example is included
purely for illustrative purposes, and in the rest of the paper we only use
the monotonicity result in Lemma \ref{lemma:bayes_mono}.

\begin{description}
\item \label{ex:belief_exact}

\item[Example 1] We now provide a special case of our model in which there
is enough symmetry in the network that Bayesian updates can be explicitly
characterized and, of course, verifies Lemma \ref{fig:Island_network}. In
this example, we first impose some restrictions on the graph structure.
Specifically, we assume that the graph $G$ satisfies 
\begin{equation*}
S_{ij}^{G}=1,S_{ik}^{G}=1\ \implies S_{jk}^{G}=1\qquad \forall \ i,j,k\in V
\end{equation*}%
and 
\begin{equation*}
W_{ij}^{G}=1\qquad \forall \ i,j\in V\ \text{such that }S_{ij}^{G}=0.
\end{equation*}%
We have just defined a class of networks where there is a weak link between
any two agents not connected by a strong link and agents form islands of
strongly linked cliques, i.e., each agent is part of a clique of strongly
connected agents. There are several such cliques, forming \textquotedblleft
islands\textquotedblright . Figure \ref{fig:Island_network} provides an
illustration of such a network. 
\begin{figure}[tbp]
\centering
\includegraphics[width=0.45\textwidth]{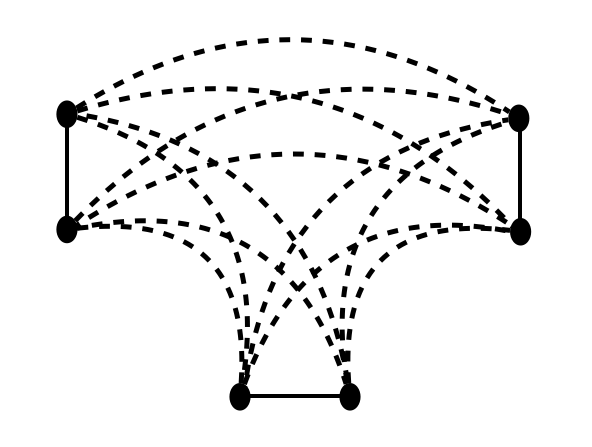}
\caption{A network which satisfies the assumptions in Example 1}
\label{fig:Island_network}
\end{figure}
Next, we also restrict the weak link structure further, by assuming that
each time the Poisson clock (of rate $\gamma $) ticks all weak links in the
network are activated. Under these assumptions, and as we take the limit $%
\epsilon \rightarrow 0$, we can characterize the exact belief updates as
follows 
\begin{equation*}
{\mu }_{i}(t)=(1-e^{-\lambda (t-T_{i}(t))})\frac{1}{2}+e^{-\lambda t}{\mu }%
_{i}(T_{i}(t)).
\end{equation*}%
It is also straightforward to see that the beliefs in this equation satisfy
the monotonicity property in Lemma \ref{lemma:bayes_mono}. \hfill
\end{description}

%

Belief monotonicity in Lemma \ref{lemma:bayes_mono} immediately yields our
next result, which shows that agents only change their action during times
of information arrival.

\begin{lemma}
\label{prop:agent_switch}For any agent $i\in V$ and all $t$, we have:%
\begin{equation*}
a_{i}(t^{-})\neq a_{i}(t)\implies t\in \mathcal{T}_{i}.
\end{equation*}
\end{lemma}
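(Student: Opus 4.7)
I would prove the contrapositive: if $t \notin \mathcal{T}_i$, then $a_i(t) = a_i(t^-)$. Assume $t$ is not a time of information arrival for agent $i$; by Definition~\ref{def:Last_New_Info_Time} this amounts to (i) no weak link adjacent to $i$ activating at $t$, (ii) $a_j(t) = a_i(t)$ for every $j \in N(i)$, and (iii) no $\epsilon$-tremble for $i$ at $t$.

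From the utility specification and \eqref{eq:action_choice}, agent $i$'s best response is $a_i(t)=1$ iff
\[
g_i(t) \;:=\; (2\mu_i(t) - 1) + \tau\bigl[f_i(1,t) - f_i(0,t)\bigr] \;>\; 0,
\]
and $a_i(t)=0$ when $g_i(t)<0$ (with inertia at ties). A switch at $t$ therefore requires the sign of $g_i$ to change at $t$. I would organize the argument into two parts: ruling out a jump discontinuity of $g_i$ at $t$, and ruling out a continuous zero-crossing of $g_i$ at $t$.

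For the first part, a jump in $\mu_i$ requires a new observation for $i$ at $t$; the only channels are (a) an adjacent weak link activating, (b) an $\epsilon$-tremble, or (c) the observed switch of a strong neighbor's action. Channels (a) and (b) are excluded by (i) and (iii). For (c), if $a_j(t^-) \neq a_j(t)$ for some $j \in N(i)$, then either $i$ does not switch simultaneously, in which case $a_j(t) \neq a_i(t) = a_i(t^-)$ violates (ii); or the switches cascade at the same instant, in which case at the intermediate moment $j$ has already flipped while $i$ has not, so $a_j \neq a_i(t^-)$ creates a mismatch triggering condition~2 at $t$ — contradicting $t \notin \mathcal{T}_i$. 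A jump in $\Delta f_i := f_i(1,\cdot)-f_i(0,\cdot)$ is ruled out by the same reasoning. Hence $g_i$ is continuous at $t$.

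For the second part, I would use Lemma~\ref{lemma:bayes_mono} to exclude a continuous zero-crossing. Under (ii), $\Delta f_i(t) = (2a_i(t)-1)\,|N(i)|$, so the sign of $g_i(t)$ is determined by whether $\mu_i(t)$ lies above or below the threshold $(1 - \tau\Delta f_i(t))/2$; a direct case analysis in each of the four combinations of sign$(\mu_i-1/2)$ and $a_i\in\{0,1\}$ shows that consistency of $a_i(t^-)$ as a best response pins $\mu_i(t^-)$ strictly to the side of this threshold corresponding to the action $a_i(t^-)$, with $1/2$ lying on the favorable side. Lemma~\ref{lemma:bayes_mono} then guarantees $\mu_i$ stays strictly on its current side of $1/2$ until the next information arrival, so $g_i$ cannot reach zero at $t$; hence $g_i(t)$ and $g_i(t^-)$ have the same sign and $a_i(t) = a_i(t^-)$. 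The main obstacle is the subtle treatment of the simultaneous cascade in step~(c) — it is handled by viewing condition~2 of Definition~\ref{def:Last_New_Info_Time} as capturing the instantaneous mismatch at $t$ arising in the middle of the cascade, which places $t$ in $\mathcal{T}_i$ and closes the argument.
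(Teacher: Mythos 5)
Your proof is correct and follows the same route as the paper's: argue the contrapositive, observe that at a non-information-arrival time both the belief $\mu_i$ and the network term $f_i(1,\cdot)-f_i(0,\cdot)$ are continuous, and conclude that the best-response comparison cannot change sign at $t$. The paper's version is terser --- it simply passes the weak best-response inequality at $t^{-}$ to the limit and asserts that the action persists --- and your write-up supplies two refinements that the paper leaves implicit. First, a weak inequality surviving the limit only yields indifference at $t$, which by itself does not pin down the action; you close this by invoking Lemma \ref{lemma:bayes_mono} to show the comparison is strict (the belief stays strictly on one side of $1/2$ while, under your condition (ii), the network term favors the incumbent action), so the tie case never actually arises. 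Second, you explicitly treat the possibility that a strong neighbor switches at exactly time $t$, noting that any resulting instantaneous mismatch already places $t$ in $\mathcal{T}_i$ by condition 2 of Definition \ref{def:Last_New_Info_Time}; the paper folds this into the unargued claim that beliefs are continuous at non-arrival times. Neither refinement changes the structure of the argument, but both make the two-line proof in the paper airtight.
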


Hence, if any agent changes her action at time $t$, then it must be the case
that $t$ is a time of information arrival. A direct but important
consequence of Lemma \ref{prop:agent_switch} is that all agents will remain
with their action until one of two events: either there is a weak link
activation or an $\epsilon $-tremble.\footnote{%
An agent can also receive new information from one of her strongly-linked
neighbors, but for this neighbor to change her action in turn requires
either a weak link activation or $\epsilon $-tremble.}\ 

With these results, we are now ready to characterize the BNE action profiles
of the entire network. For this theorem, let us define $\mathfrak{s}_{ij}=1$
if there is a strongly-connected path that links agents $i$ and $j$ (i.e.,
there exists a path of agents $k_{1},\ldots ,k_{K}$ between $i$ and $j$ such
that $S_{ik_{1}}=S_{k_{1}k_{2}}=\ldots S_{k_{K}j}=1$). We also say that a
network is strongly connected if $\mathfrak{s}_{ij}=1$ for all $i,j\in V$.
Finally, we say that a graph is regular if all agents have the same number
of neighbors (and hence $d_{\max }=d_{\min }$).

\begin{theorem}
\label{thm:eq_chara}A BNE always exists. Let $a(t)=[a_{1}(t),a_{2}(t),\cdots 
$ $\cdots ,a_{N}(t)]$ be a BNE action profile for time $t$. Then:

\begin{itemize}
\item If $\tau \leq 1/d_{\max }$, all agents linked by a strongly-connected
path play the same action. That is, for all $i,j\in V$ and all time periods $%
t$,%
\begin{equation*}
\mathfrak{s}_{ij}=1\implies a_{i}(t)=a_{j}(t).
\end{equation*}

\item If $\tau > 1/d_{\min}$ all agents continue to play same action they
were initialized with, i.e., Action $0$, at all time periods $t$.


In particular, if $G$ is also regular ($d_{\max }=d_{\min }$), then we have $%
a_{i}(t)=a_{j}(t)$ for all $i,j\in V$ and all time periods $t$.
\end{itemize}
\end{theorem}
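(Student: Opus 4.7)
The plan is to handle the three claims in turn, in each case leaning on Lemmas \ref{lemma:bayes_mono} and \ref{prop:agent_switch} and on the paper's full-revelation remark, and reducing to a single-agent preference computation. For existence I would construct a BNE forward along the (almost surely discrete) sequence of payoff shocks, weak-link transitions, and $\epsilon$-trembles. Between consecutive events beliefs and actions are constant; at each event the updated simultaneous-move stage game is a weighted coordination game with single-agent biases, i.e.\ a potential game with potential $\sum_{i}\mathbb{E}[R_{a_i}]+\tau\sum_{(i,j)\in\mathcal{E}_s}\mathbb{I}_{a_i=a_j}$, so a pure-strategy Nash equilibrium exists at any potential-maximizer; a tie-breaking rule favoring the informed action at knife-edge configurations turns this into a full BNE trajectory.

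For $\tau\le 1/d_{\max}$, the key single-agent observation is that whenever $\mu_i(t)\in\{0,1\}$ the expected-utility gap in favor of the informed action satisfies
\[
1+\tau\bigl(f_i^{\text{inf}}-f_i^{\text{unf}}\bigr)\;\ge\;1-\tau\,|N(i)|\;\ge\;1-\tau\,d_{\max}\;\ge\;0,
\]
with strict inequality as soon as at least one neighbor already plays the informed action. Suppose for contradiction that in some BNE there are strongly-linked $i,j$ with $a_i(t)\ne a_j(t)$. By the second bullet of Definition \ref{def:Last_New_Info_Time}, $t\in\mathcal{T}_i\cap\mathcal{T}_j$, and by the full-revelation property both posteriors lie in $\{0,1\}$; since strong links transmit the entire action-and-payoff history, the two posteriors coincide. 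The single-agent inequality then forces both agents to best-respond with the same informed action, contradicting $a_i(t)\ne a_j(t)$. Propagating this pair-wise argument along any strongly-connected path yields the claim.

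For $\tau>1/d_{\min}$, the complementary single-agent lemma is: if every neighbor of $i$ plays a common action $a$, then the utility gain from deviating is at most $1-\tau\,|N(i)|\le 1-\tau\,d_{\min}<0$, so $i$ strictly prefers $a$ even when fully informed that the opposite action has the higher material payoff. I would then induct along the ordered sequence of events from the all-zero initialization: at each event the only candidates for switching are agents whose information sets changed, but by the strict-preference lemma none of them wants to leave $0$ while her neighbors play $0$, so the all-zero profile persists for all time. The regular-graph corollary is then immediate, since all agents remain at $0$ and hence play a common action.

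I expect the main obstacle to be justifying cleanly, without circularity, that strongly-linked agents share the same posterior at every time of information arrival, since a weak-link activation at $i$ is not literally observed by $j$ and an $\epsilon$-tremble is not state-informative. I would address this by isolating a small auxiliary lemma — within any strongly-connected component the posterior is common knowledge at every component-wide information arrival, jumping to $\{0,1\}$ by the full-revelation property — and invoke it at the critical step of the contradiction argument in Part 2.
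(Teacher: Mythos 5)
Your proof is correct and follows essentially the same route as the paper's: both bullets reduce to the same single-agent best-response inequalities (a gap of at least $1-\tau d_{\max }\geq 0$ in favor of the informed action, strict once a neighbor already plays it, and a deviation loss of $1-\tau d_{\min }<0$ in the conformity case), combined with the observation that two discordant strongly-linked neighbors both learn the state exactly because strong links transmit actions and payoffs. The one genuine addition is your potential-game existence argument, which the paper's proof omits entirely; it is valid because the stage game admits the exact potential you write down even when agents hold heterogeneous beliefs.
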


Theorem \ref{thm:eq_chara} greatly simplifies the characterization of any
BNE. Specifically, provided that the degree of local payoff interactions, as
measured by the parameter $\tau $, is not too large, then all
strongly-connected agents and all agents linked via strongly-connected paths
always play the same action. Notably, this is true even when $\tau =0$,
because strong links perfectly transmit information about the underlying
state, creating a powerful force towards all agents playing the same action.
Given this information, agents in a strongly-connected component all have
the same beliefs about which action has greater material payoff.
Consequently, when $\tau =0$, they will all play the same action. The same
conclusion applies when $\tau $ is not too large. In this case, there is an
additional force, which is a desire to match what one's local neighborhood
is doing. This typically reinforces all agents playing the same action in a
strongly-connected component. Nevertheless, the next example shows that when
the parameter $\tau $ is larger than $1/d_{\max }$, the desire to match
one's neighbors can lead to different actions being played in different
parts of a strongly-connected component. The second part of the theorem,
however, shows that even in this case, coordination can be achieved if the
threshold $\tau $ is high enough. However, the downside of such a high
threshold is that even if a node knows that Action 1 has the higher material
payoff, she continues to play Action 0, since all her neighbors are playing
Action 0, and there is more utility in conforming with her neighbors, than
in playing the action with the higher material payoff. 

\begin{description}
\item[Example 2] \label{ex:het_actions}Figure \ref{fig:Example_Heter_Action}
depicts a network in which different actions can be supported among
strongly-connected agents. The figure shows a network with threshold $\tau
=2/5$ where there exists a BNE with different actions within the
strongly-connected component. Intuitively, though strongly-connected, the
network has two different parts and local actions within each part matter
more for payoffs than actions in the other half. This is enough to sustain
an equilibrium in which the left side plays Action 1, while the right side
plays Action 0. 
\begin{figure}[tbp]
\centering
\includegraphics[width=0.45\textwidth]{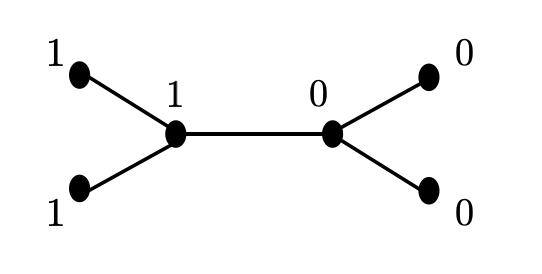}
\caption{A heterogeneous BNE, when $\protect\tau =2/5$.}
\label{fig:Example_Heter_Action}
\end{figure}
This example shows that if we have a high enough threshold, we can maintain
diversity even within a strongly connected network. \hfill
\end{description}



\subsection{Average Welfare\label{sec:gen_analysis}}

In this subsection, we provide a general characterization of average welfare
along a BNE. This characterization builds on defining an embedded Markov
chain over the action profiles of agents in the network.

Note that we use the term \textquotedblleft embedded\textquotedblright\
since we consider the Markov chain in discrete time, although the underlying
learning process is happening in continuous time. In particular, transitions
take place in this Markov chain only at times when there is a payoff shock,
which will be sufficient for us to keep track of long-run average payoffs
(per equation \eqref{def:average_welfare}).\footnote{%
Embedded Markov chains are used in queueing theory, where job arrivals and
departures happen in continuous time, but discrete-time representations
depending on times of job arrival and departure are sometimes more useful
(e.g., \cite{wolff1989stochastic}).}


\begin{definition}[Activation Markov Chain]
\label{def:weak_link_MC} An activation Markov chain (AMC) is an embedded
Markov chain, where the $i^{th}$ transition happens at time $T_{i}$.

\textbf{States of AMC: } The states of this Markov chain are denoted by $%
(P,R,B)$ where:

\begin{itemize}
\item $P \in \{ 0,1 \}^n$ denotes the BNE action profile played by the
agents. 

\item $R \in \{ 0,1 \}$ denotes the action which has the higher reward,
i.e., if we are in time epoch $k$, $R = A_k$.

\item $B\subseteq \mathcal{E}_{w}^{G}$ denotes the set of weak links which
are dormant at the end of an epoch.
\end{itemize}

\textbf{Transition Probabilities of AMC: } The transition probabilities of
this chain are defined as follows: 
\begin{equation*}
\mathbb{P}^{G}(({P}_{l},R_{l},B_{l})|(P_{m},R_{m},B_{m}))=\frac{1}{2}\times 
\mathbb{P}^{G}(P_{l},B_{l}|P_{m},R_{l},B_{m}),
\end{equation*}%
where $\mathbb{P}^{G}(P_{l}B_{l}|P_{m},R_{l},B_{m})$ denotes the probability
that the actions are played according to $P_{l}$, and the weak links in $%
B_{l}$ are dormant, given the action profile is initialized at $P_{m}$, the
weak links in $B_{m}$ are dormant, and the action with the higher material
payoff is $R_{l}$.
\end{definition}

The AMC in Definition \ref{def:weak_link_MC} encapsulates the behavior of
the agents in the network at times of payoff shocks. For example, suppose we
are in state $m$, given by the tuple $(P_{m},R_{m},B_{m})$ at the time of a
payoff shock. By definition, before the arrival of the shock, agents are
playing according to $P_{m}$, and the action with the higher material payoff
was $R_{m}$. Furthermore, the weak links in $B_{m}$ are dormant, meaning
that they are available to potentially become active. This also means that
the weak links in $\mathcal{E}_{w}^{G} \setminus B_m$ are inactive. After
this shock, $R_{l}$ is the action with the higher material payoff.
Thereafter, weak link activation and $\epsilon $-trembles can induce changes
in the action profile of agents. What is particularly convenient in using an
embedded Markov chain is that we do not need to keep track of these
intermediate changes in action profiles. Rather, it is sufficient to focus
on the action profile after all of these changes take place---that is, the
action profile that is being played at the time of the next payoff shock,
which is denoted by $P_{l}$. Furthermore, what information will be
transmitted during an epoch depends on which weak links are dormant, we also
keep track of these in the state $B_{l}$. This also explains why in the
transition probabilities there is a $1/2$: at the time of \ the next payoff
shock, each one of the two actions is the one with the higher material
payoff with probability $1/2$.

In summary, the AMC encapsulates the information about transitions between
action profiles at times of payoff shocks. This is particularly useful,
since from our definition of long-run average payoffs $\mathcal{S}^{{G}}$ in
equation \eqref{def:average_welfare}, it is sufficient to know payoffs at
times of payoff shocks.

The next theorem exploits this feature and characterizes the long-run
average payoffs in terms of the stationary distribution of the AMC.

\begin{theorem}
\label{thm:general_MC_result} For any (weakly)-connected graph ${G}$, the
stationary distribution of the AMC in Definition \ref{def:weak_link_MC},
denoted by $\eta ^{{G}}$, exists. Furthermore, long-run average welfare can
be expressed as a function of this stationary distribution: 
\begin{equation*}
\mathcal{S}^{{G}}=\sum_{q}\eta _{q}^{{G}}f(q),
\end{equation*}%
where $\eta _{q}^{{G}}$ is the stationary probability of state $q$ and $f(q)$
denotes the fraction of agents playing the higher-reward action in state $q$%
, given by 
\begin{align}
f(q)& =f({P}_{q},R_{q},B_{q})  \notag \\
& =\frac{1}{n}\sum_{v\in V}\mathbbm{1}_{{P}_{q}(v)=R_{q}}.  \notag
\end{align}
\end{theorem}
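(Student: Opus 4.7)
The plan is to prove the theorem by standard finite-state Markov chain theory: establish that the AMC has a finite state space on which the dynamics are irreducible and aperiodic, conclude existence (and uniqueness) of $\eta^G$, and then derive the welfare expression from the ergodic theorem applied to the indicator function $f(q)$.

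First, I would observe that the state space of the AMC is finite. The action profile $P$ lies in $\{0,1\}^n$, the higher-reward action $R$ lies in $\{0,1\}$, and the dormant weak-link set $B$ is a subset of the finite edge set $\mathcal{E}_w^G$. By Theorem \ref{thm:eq_chara}, the set of admissible BNE action profiles is further restricted (agents in a strongly-connected component play the same action), but this restriction still leaves a finite set, so the AMC has finitely many states.

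Second, I would verify irreducibility. Given any two states $(P_m,R_m,B_m)$ and $(P_l,R_l,B_l)$, I would exhibit a one-step transition of positive probability. The new reward $R_l$ occurs with probability $1/2$ by definition. Conditional on $R_l$, within the epoch $(T_k,T_{k+1}]$ three independent mechanisms act with strictly positive probability: (i) $\epsilon$-trembles, which over the epoch can occur an arbitrary number of times and move any single agent to the opposite action, with strong-link dynamics (per Lemma \ref{prop:agent_switch} and Theorem \ref{thm:eq_chara}) then synchronizing each strongly-connected component to an admissible BNE configuration; (ii) weak-link activations governed by the $\gamma$-clock, moving dormant links into the inactive state; and (iii) the $\phi$-clock, moving inactive links back to dormant. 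Composing these events over an epoch of positive length (which has positive probability since the next payoff shock arrives via the $\lambda$-clock), one can reach any admissible target $(P_l,B_l)$. The weak connectivity of $G$ guarantees that information propagation is not blocked into disjoint components that would force the state space to decompose. Aperiodicity is immediate because the self-loop $(P_m,R_m,B_m) \to (P_m,R_m,B_m)$ has strictly positive probability (no trembles, no weak-link clock tickings, and $R_l = R_m$).

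Third, since a finite, irreducible, aperiodic Markov chain admits a unique stationary distribution, $\eta^G$ exists. Fourth, to obtain the welfare formula, I would apply the ergodic theorem for finite Markov chains to the bounded function $f$: from any initial state, $\mathbb{E}[f(X_k)] \to \sum_q \eta_q^G f(q)$ as $k \to \infty$, and moreover $\frac{1}{K}\sum_{k=1}^K f(X_k) \to \sum_q \eta_q^G f(q)$ almost surely. Since $\mathcal{S}_k^G = f(X_k)$ where $X_k$ is the AMC state at payoff shock $T_k$, the limit in \eqref{def:average_welfare} equals $\sum_q \eta_q^G f(q)$ as claimed.

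The main obstacle I expect is the rigorous verification of irreducibility. The subtlety is that the states of the AMC are not arbitrary triples in $\{0,1\}^n \times \{0,1\} \times 2^{\mathcal{E}_w^G}$ but only those $(P,R,B)$ consistent with BNE constraints from Theorem \ref{thm:eq_chara}. One must therefore argue that any sequence of trembles within an epoch, after being ``absorbed'' through the rapid equilibration on strong links and any weak-link transmissions that occur, produces a new admissible BNE profile, and that by a suitable choice of these exogenous events one can reach every admissible target profile. The bookkeeping of joint dormant/inactive transitions together with trembles is what makes this step technical rather than immediate.
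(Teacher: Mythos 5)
Your proposal is correct and follows essentially the same route as the paper: finiteness of the state space plus irreducibility (via $\epsilon$-trembles reaching arbitrary admissible profiles) and aperiodicity (via the positive-probability self-loop) give a unique stationary distribution, and the ergodic theorem then yields $\mathcal{S}^{G}=\sum_{q}\eta_{q}^{G}f(q)$. The only cosmetic difference is that you argue for a positive one-step transition between any two states, whereas the paper chains four intermediate transitions through the all-conformal profiles; both hinge on the same underlying events and neither changes the structure of the argument.
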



Theorem \ref{thm:general_MC_result} is one of the main results of the paper
and provides a tight characterization of long-run average welfare. In the
rest of the paper, we use this characterization to determine which social
structures achieve a high degree of adaptation and welfare in a changing
environment. This analysis is facilitated by the fact that, as we will see,
the stationary distribution of the AMC is relatively straightforward to
compute in many graphs (including those we will study in our main results in
Theorems \ref{thm:final_1} and \ref{thm:star_is_best_1}). 

We will introduce some additional notation here which will be used
throughout the rest of the paper. Let us define a \emph{conformal state} as
one in which all nodes play the action with the higher material reward and
denote the set of all conformal states by $C$. Similarly, define a \emph{%
diverse state} as one in which not all agents are playing the same\
action---so at least one node is playing Action 0 and at least one node is
playing Action 1. Let us denote the set of diverse states by $D$. We define
the conditional probability of transitioning to a conformal state as: 
\begin{equation*}
p^{G}=\sum_{s\in D}\mathbb{P}^{G}(C|s)\eta _{s}^{G}.
\end{equation*}%
Since $C$ is the set of all possible states where all nodes play the same
action and this action is the one with the higher reward, we have $\mathbb{P}%
^{G}(C|s)=\sum_{B}\mathbb{P}^{G}((1,1,B)|s)+\mathbb{P}^{G}((0,0,B)|s)$.


\section{Adaptation to Change}

\label{sec:under_res}

In this section, we study which network structures are more adaptable to
changing environments---in the sense of generating high long-run average
welfare. In the next subsection, we start with another one of our main
results: in any network without weak links, long-run average welfare is very
low, and in fact only a little bit higher than choosing random actions. Our
next result establishes that an \textit{island network}---where agents are
strongly connected within islands (or components) and islands themselves are
weakly connected---can potentially achieve higher welfare. Finally, we fully
characterize the best network structures from the viewpoint of achieving
long-run adaptation, which turns out to be those that have a star-like
structure, with a large strongly-connected component in the middle, and
weakly-connected leaves providing information to the star component.

\subsection{Low Welfare without Weak Links}

The next theorem is one of our main results and shows that, without weak
links, welfare is very low because society fails to adapt to changes in the
underlying state.

\begin{theorem}[No fast learning without weak links]
\label{thm:final_1} Consider a graph ${G}$ with no weak links. Suppose that $%
\tau \leq 1/d_{\max }$. Then:%
\begin{equation*}
\mathcal{S}^{{G}}\leq \frac{1}{2}+\frac{\epsilon }{2(\lambda +\epsilon )}.
\end{equation*}
Furthermore, when $\tau \geq 1/d_{\min }$, we have $\mathcal{S}^{{G}} = 
\frac{1}{2}$.
\end{theorem}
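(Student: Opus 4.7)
The plan is to combine the equilibrium characterization of Theorem \ref{thm:eq_chara} with the AMC framework of Theorem \ref{thm:general_MC_result} to reduce the welfare computation to a decoupled per-component calculation driven entirely by $\epsilon$-trembles, and then to conclude by a super-additivity inequality. Since $G$ has no weak links, it is literally the disjoint union of its strongly-connected components $C_{1},\dots,C_{K}$ with $\sum_{j}|C_{j}|=n$. Under $\tau\le 1/d_{\max}$, the first bullet of Theorem \ref{thm:eq_chara} forces every agent in $C_{j}$ to play a common action $a^{(j)}(t)\in\{0,1\}$, so the AMC state $(P,R,B)$ of Definition \ref{def:weak_link_MC} collapses to $(a^{(1)},\dots,a^{(K)},R)$: $B$ is trivial in the absence of weak links, and because different components share no edges their dynamics couple only through the common payoff-shock clock.

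The key dynamical claim I would establish next is that, within any inter-shock interval $(T_{m},T_{m+1}]$, component $C_{j}$ ends the interval playing $R_{m+1}$ if at least one $\epsilon$-tremble lands in $C_{j}$, and playing $a_{m}^{(j)}$ otherwise. Indeed, trembles are the only source of fresh information in the interval---strong-link information is already shared inside the component, and no weak links exist---and when a tremble flips agent $i\in C_{j}$, she observes the material payoff of the opposite action, which together with the prior $R_{0}-R_{1}\in\{-1,+1\}$ reveals the current state. Her strong neighbors observe her action and payoff, and a single-deviation check shows that under $\tau\le 1/d_{\max}$ every agent strictly prefers the revealed better action regardless of the others (any network loss is at most $\tau d_{\max}\le 1$, less than the material-payoff gain of $1$), so by Lemma \ref{lemma:bayes_mono} and the BNE uniqueness implied by Theorem \ref{thm:eq_chara} the whole of $C_{j}$ transitions to $R_{m+1}$ before the next shock.

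Third, the local tremble rate inside $C_{j}$ is $\epsilon|C_{j}|/n$, racing against an exponential-$\lambda$ inter-shock gap, so the standard Poisson-competition calculation yields
\begin{equation*}
p_{c}:=\mathbb{P}\bigl(\text{at least one tremble hits }C_{j}\text{ before the next shock}\bigr)=\frac{\epsilon c}{n\lambda+\epsilon c},\qquad c=|C_{j}|.
\end{equation*}
Because $R_{m+1}$ is drawn uniformly on $\{0,1\}$ and independently of $a_{m}^{(j)}$, the stationary probability that $C_{j}$ is correct at a shock equals $\tfrac12+\tfrac12 p_{c}$. Applying Theorem \ref{thm:general_MC_result} and aggregating over components weighted by $|C_{j}|/n$ gives
\begin{equation*}
\mathcal{S}^{G}=\frac{1}{2}+\frac{1}{2n}\sum_{j=1}^{K}\frac{\epsilon|C_{j}|^{2}}{n\lambda+\epsilon|C_{j}|}.
\end{equation*}

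Finally, because $g(c):=\epsilon c/(n\lambda+\epsilon c)$ is increasing in $c$, the function $f(c):=c\cdot g(c)$ is super-additive, $f(c_{1})+f(c_{2})\le(c_{1}+c_{2})g(c_{1}+c_{2})=f(c_{1}+c_{2})$; iterating, $\sum_{j}f(|C_{j}|)\le f(n)=\epsilon n/(\lambda+\epsilon)$, which delivers the claimed bound $\mathcal{S}^{G}\le\tfrac12+\tfrac{\epsilon}{2(\lambda+\epsilon)}$. The case $\tau\ge 1/d_{\min}$ is immediate from the second bullet of Theorem \ref{thm:eq_chara}: every agent plays Action $0$ at all times along the BNE, so $\mathbb{P}(a_{i}(T_{k})=A_{k})=\mathbb{P}(A_{k}=0)=\tfrac12$ and $\mathcal{S}^{G}=\tfrac12$. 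I expect the main obstacle to be the dynamical step of the second paragraph---rigorously verifying that a single tremble propagates across the entire strongly-connected component before the next shock and that no persistent mixed-action profile survives to spoil the \textquotedblleft one tremble suffices\textquotedblright{} accounting---which ultimately rests on a careful BNE-selection argument at (and near) the boundary $\tau=1/d_{\max}$.
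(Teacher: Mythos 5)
Your proposal is correct and its engine is the same as the paper's: a tremble-versus-shock Poisson race giving a per-epoch ``learn the truth'' probability, fed into a two-state (all-correct / all-wrong) reduction of the AMC, yielding a stationary probability of correctness $\tfrac12(1+q)$ with $q=\epsilon/(\lambda+\epsilon)$; the second part via the second bullet of Theorem \ref{thm:eq_chara} is exactly what the paper intends. Where you genuinely add something is the treatment of disconnected graphs: the paper's proof silently works with a single connected block (its Lemma \ref{lemma:prob_switch_random} races the \emph{network-wide} tremble clock of rate $\epsilon$ against $\lambda$ and concludes $\mathcal{S}^{G}=\tfrac12(1+q)$ for ``any connected graph''), whereas you decompose into strongly-connected components, correctly assign each component of size $c$ the thinned rate $\epsilon c/n$ so that $p_{c}=\epsilon c/(n\lambda+\epsilon c)$, and recover the stated bound via the super-additivity of $c\mapsto c\,g(c)$. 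That step is a real improvement in rigor, since the theorem statement does not assume connectedness. Two small caveats: your intermediate ``$\mathcal{S}^{G}=\cdots$'' should be ``$\leq$'', because at the boundary $\tau d_{i}=1$ an informed agent is only indifferent and a component could fail to switch after a tremble --- this only lowers welfare, so the upper bound survives, but the equality need not; and Theorem \ref{thm:eq_chara} gives no uniqueness claim, only that all BNE profiles are component-constant, which is all your argument actually needs.
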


Theorem \ref{thm:final_1} shows that the long-run average welfare is low and
upper bounded by $1/2+\mathcal{O}(\epsilon )$ in a network without weak
links. Recall that we are interested in economies where $\epsilon $ is very
small (so that trembles are much rarer than payoff shocks). Specifically, as 
$\epsilon \rightarrow 0$, long-run average welfare is no different than an
environment in which no agent has any information about the underlying state
and all players choose their action randomly. Furthermore, if the threshold $%
\tau $ is sufficiently high, no node will change their action and therefore,
the average welfare of such networks will be exactly $1/2$.

While this result may at first appear paradoxical, it is in fact quite
intuitive. Consider a social network in which agents learn the underlying
state at some point and all coordinate in taking the higher-reward action
given this state. Without any weak links and no $\epsilon $-trembles, they
will all continue to play this action, but over time the underlying state
will change, and in the long run, it will only coincide with the initial
state (and thus actions) with probability $1/2$. In this configuration,
long-run average welfare would be exactly $1/2$. A social network without
weak links but with $\epsilon $-trembles can do a little bit better than
this hypothetical situation, because trembles will reveal the underlying
state from time to time, enabling all strongly-connected agents that receive
this information to switch to the higher-reward action. But when $\epsilon $
is small, this adaptation is so slow that it only has a small impact on
long-run average welfare, as formally established in Theorem \ref%
{thm:final_1}.

An immediate implication is that, as we claimed in the Introduction, weak
links are essential for fast learning and adaptation in a changing
environment. The next subsection shows, however, that substituting weak
links for strong ones is not sufficient. The last two subsections then fully
characterize how island networks, connected via weak links, can achieve
higher welfare and what sorts of networks achieve the highest welfare in
this setup.


\subsection{Do Weak Links Necessarily Improve Welfare?}

In this subsection, we compare two simple networks shown in Figure \ref%
{fig:Link_Comparison_Simple} to build some preliminary intuition about the
role of weak links.

\begin{figure}[]
\centering
\includegraphics[width=0.55\textwidth]{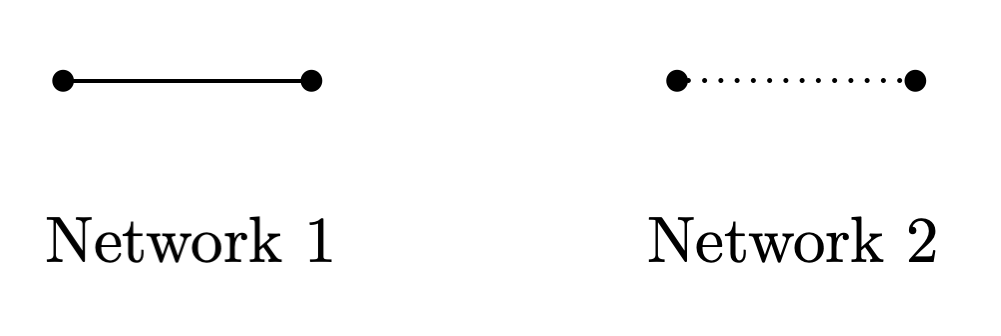}
\caption{Two simple networks}
\label{fig:Link_Comparison_Simple}
\end{figure}

The two networks shown in Figure \ref{fig:Link_Comparison_Simple} have two
agents each. In the first, both agents are connected via a strong link, and
in the second network, the two agents are connected via a weak link. We show
in the Appendix that, in fact, Network $2$ (with the weak link) has a lower
long-run average welfare than Network $1$ (with the strong link). The reason
is that substituting a weak link for a strong one slows down information
transmission and does not alleviate the slow learning problem characterized
in Theorem \ref{thm:final_1}. Instead, long-run adaptation requires weak
links to be \emph{additional} conduits of information, not substitutes for
strong links.


\subsection{Adaptation in Island Networks}

In this subsection, we consider island networks connected via weak links.
While it is hard to characterize the exact welfare for these networks, the
next proposition provides an upper bound on the average welfare for these
networks. For this proposition, recall that $p^{G}$, defined in Section \ref%
{sec:gen_analysis}, corresponds to conditional probability of transitioning
to a conformal state (that is, a state in which all agents play the action
with the higher reward).

\begin{proposition}
\label{prop:island_weak} Consider an island network $G$ with $k$ islands,
each with $m_{c}$ nodes such that $m_{1}\geq m_{2}\geq \cdots \geq m_{k}$.
Furthermore, these islands are connected via weak links. Then, average
welfare can be upper bounded as follows:%
\begin{equation*}
\mathcal{S}^{G}\leq \underbrace{\left( \frac{2p^{G}+\frac{\lambda }{\epsilon 
}p^{G}}{1+2p^{G}+2\frac{\lambda }{\epsilon }p^{G}}\right) }_{(I)}+%
\underbrace{\left( \frac{1}{1+2p^{G}+2\frac{\lambda }{\epsilon }p^{G}}%
\right) }_{(II)}\times \underbrace{\left( \frac{1}{2} + \frac{%
\sum_{i=1}^{\min \{\lceil \gamma /\lambda \rceil ,k\}}m_{i}}{%
\sum_{i=1}^{k}m_{i}}\right) }_{(III)}
\end{equation*}
\end{proposition}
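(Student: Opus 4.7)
The plan is to apply Theorem~\ref{thm:general_MC_result} to write long-run welfare as
\[
\mathcal{S}^{G}=\sum_{q}\eta_{q}^{G}f(q),
\]
and to partition the AMC state space using the classification introduced right before the proposition into conformal states $C$, anti-conformal states $A$ (all agents play the wrong action), and diverse states $D$. Since $\tau\le 1/d_{\max}$, Theorem~\ref{thm:eq_chara} ensures that in every BNE each island acts as a single unit, so $C\cup A\cup D$ exhausts the state space. Denoting the stationary masses by $\pi_{C},\pi_{A},\pi_{D}$ and using $f\equiv 1$ on $C$ and $f\equiv 0$ on $A$,
\[
\mathcal{S}^{G}=\pi_{C}+\sum_{s\in D}\eta_{s}^{G}f(s)\le \pi_{C}+\pi_{D}\sup_{s\in D}f(s).
\]
The strategy is to show $\pi_{C}\le$~Term~(I), $\pi_{D}\le$~Term~(II), and $\sup_{s\in D}f(s)\le$~Term~(III), then combine.

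For the coarse bounds on $\pi_{C}$ and $\pi_{D}$, I would set up flow-balance equations at the level of $\{C,A,D\}$ and exploit two dynamical observations. First, by Definition~\ref{def:weak_link_MC}, the transition out of any state $(P,R,B)$ depends only on $(P,B)$, because the next reward $R'$ is drawn uniformly and independently of $R$ and the within-epoch evolution uses only $(P,R',B)$. This puts the balance equation for $\pi_{C}$ in the clean form $\pi_{C}=(\pi_{C}+\pi_{A})Q_{C}+p^{G}$, with $Q_{C}$ the common probability of transitioning to $C$ from any agreeing profile and $p^{G}=\sum_{s\in D}\mathbb{P}^{G}(C\mid s)\eta_{s}^{G}$ the joint flow from $D$ to $C$. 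Second, whenever the profile is all-equal, $\mathcal{W}(t)=\emptyset$ and no weak link can activate, so breaking agreement during an epoch requires an $\epsilon$-tremble; since epochs have mean length $1/\lambda$, the per-epoch escape from $C\cup A$ into $D$ is of order $\epsilon/(\epsilon+\lambda)$, while the flow between $C$ and $A$ driven by the fresh reward draw sits at scale $1/2$ per epoch. Combining these two scales with the balance equations and the normalization $\pi_{C}+\pi_{A}+\pi_{D}=1$ and solving the small linear system produces the rational expressions in Terms~(I) and~(II); the factor $\lambda/\epsilon$ in the denominator reflects the tremble-versus-shock timescale ratio.

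For the diverse-state supremum, I bound $f(s)$ for $s\in D$ by counting how many islands can have flipped their action within the epoch that reaches $s$. An $\epsilon$-tremble initiates the first flip, and each subsequent weak-link activation flips at most one additional island, because weak links select only pairs $(c,c')$ playing different actions, so an activation transmits information from an already-flipped island to an as-yet-unflipped one. Weak links activate at Poisson rate $\gamma$ over an epoch of mean length $1/\lambda$, so at most roughly $\lceil \gamma/\lambda\rceil$ activations occur per epoch and hence at most $\min\{\lceil \gamma/\lambda\rceil, k\}$ islands can be flipped. To maximize $f(s)$ we assign the flipped label to the $\min\{\lceil \gamma/\lambda\rceil, k\}$ \emph{largest} islands, contributing $\sum_{i=1}^{\min\{\lceil \gamma/\lambda\rceil, k\}}m_{i}/n$ correctly-playing nodes. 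The remaining, unflipped islands contribute at most $1/2$, obtained by crudely bounding their fraction by $1$ and noting that they match the freshly drawn uniform reward with probability $1/2$; this yields the additive $1/2$ in Term~(III).

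The main obstacle is executing the flow-balance step carefully enough that the constants line up with the explicit form of Terms~(I) and~(II). Two points deserve particular attention: the system is biased toward correct play, so $\pi_{C}\ne\pi_{A}$ in general and one cannot short-circuit by symmetry between $C$ and $A$; and $p^{G}$ is defined as a \emph{joint} flow $\sum_{s\in D}\mathbb{P}^{G}(C\mid s)\eta_{s}^{G}$ rather than a conditional transition probability, so it must be substituted so that the coefficient of $\lambda/\epsilon$ in the denominator $1+2p^{G}+2(\lambda/\epsilon)p^{G}$ comes out right. Once the balance algebra is verified, Terms~(I) and~(II) combine with the diverse-state bound Term~(III) to give the stated upper bound on $\mathcal{S}^{G}$.
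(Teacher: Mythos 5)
Your proposal follows essentially the same route as the paper: you coarse-grain the AMC into the three classes $\{C,A,D\}$ (the paper's $\mathtt{G},\mathtt{B},\mathtt{M}$), solve the resulting balance equations using the facts that escaping an all-agree profile requires an $\epsilon$-tremble (probability $\epsilon/(\epsilon+\lambda)$ per epoch) and that the $D\to C$ flow is $p^{G}$, which yields exactly the stationary masses in Terms (I) and (II), and then bound welfare in diverse states by noting that at most $\lceil\gamma/\lambda\rceil$ weak-link activations occur per epoch, each informing at most one island, assigned to the largest islands, with the uninformed remainder contributing $1/2$. This matches the paper's proof step for step, including the slightly loose treatment of $p^{G}$ as the $\mathtt{M}\to\mathtt{M}$ exit rate.
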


Proposition \ref{prop:island_weak} gives us an upper bound on the average
welfare of island networks with both strong and weak links. Although this
bound is not tight, it is informative about the trade-offs that any network
faces in achieving high average welfare in a changing environment.
Specifically, the right-hand side of Proposition \ref{prop:island_weak}
corresponds to the contribution to average welfare from two set of states
the network may be in: it may be in a conformal state where all nodes play
the action with the higher reward, and this is captured by term \texttt{(I)}%
; or it may be in a diverse state where there are nodes playing both
actions, and the contribution of such states is represented by \texttt{(II)}$%
\times $\texttt{(III)}.

Starting with term \texttt{(I)}, we can see that if $p^{G}$ is large,
welfare in the conformal state will be close to $1/2$ when $\lambda
>>\epsilon $ (which is the case we are focusing on). Intuitively, this
captures the problem that when transition to a conformal state takes place
very rapidly, there will be little adaptation to changes in the underlying
environment. Hence, only networks that have reasonably small values for $%
p^{G}$ can achieve high welfare.

Next, turning to the remaining terms, a small value of $p^{G}$ would ensure
that term \texttt{(II)} is also large, but this has to be coupled with 
\texttt{(III)} being large. This means that either $\gamma /\lambda $ is
large, or $\sum_{i=1}^{d}m_{i}\approx \sum_{i=1}^{k}m_{i}$ for $d<<k$. Yet, $%
\gamma /\lambda $ cannot be large, because this would imply that all weak
links can get activated within an epoch, leading to very large $p^{G}$.
Hence, we must have $\sum_{i=1}^{d}m_{i}\approx \sum_{i=1}^{k}m_{i}$ for $%
d<<k$, which means the largest components of the network must contain most
of the nodes. Hence, to achieve a high upper bound long-run average welfare,
an island network must be such that its largest component contains most of
the agents.

Overall, the upper bound in Proposition \ref{prop:island_weak} highlight the
general forces that contribute to high average welfare. We see in particular
that in order to achieve adaptation in the face of changing environments:

\begin{itemize}
\item a network should be disconnected most of the time, since otherwise it
will generate too much conformity of actions, slowing down learning when the
underlying environment changes. This is achieved in island networks by
having the collection of islands be strongly disconnected. This corresponds
to the requirement that $p^{G}$ should not be too large, which also
encapsulates the requirement that $\gamma /\lambda $ should not be too large;

\item there should nevertheless be information transmission between the
disconnected components at reasonable frequencies. This is achieved in the
island networks by having weak links that are activated at sufficiently high
rates. This corresponds to the requirement that $\gamma /\lambda $ is not
too small;

\item when disconnected, we should still have that a significant fraction of
the agents still play the right action. This is achieved in the island
networks by having each island be strongly connected and weak links carrying
the relevant information to sufficiently many islands. This corresponds to
the requirement that we need the larger islands in the graph to contain most
of the nodes, i.e., $\sum_{i=1}^{\lceil \gamma /\lambda \rceil }m_{i}\approx
\sum_{i=1}^{k}m_{i})$.
\end{itemize}

\subsection{Most Adaptive Networks\label{sec:general_networks}}

\label{sec:star_best}

In the previous subsection, we saw how weakly-connected island networks can
achieve much higher long-run average welfare than our benchmark of networks
without any weak links in Theorem \ref{thm:final_1}. In this subsection, we
turn to the question of whether other networks can even do better and
characterize the best networks from the viewpoint of adaptation to changing
environments. We will see that the same principles highlighted by
Proposition \ref{prop:island_weak} guide the answer to this question. 
Specifically, we will show that a network structure that balances the need
for most agents playing the right action in conformal states and the
imperative of maintaining some diversity for information transmission
achieves the highest feasible payoff.

Anticipating the class of networks that will have these properties, we
define a \textit{star network} with $m$ components and $n$ nodes ($n>m$) as
a network with one component which has $n-m+1$ strongly-connected nodes, and
the other $m-1$ components have size $1$. Furthermore, we suppose that each
of these $m-1$ components has one weak link connecting it to the larger
component of size $n-m+1$. See Figure \ref{fig:Star_Good} for an example of
a star network.

%

\begin{figure}[h!]
\centering
\includegraphics[width=0.55\textwidth]{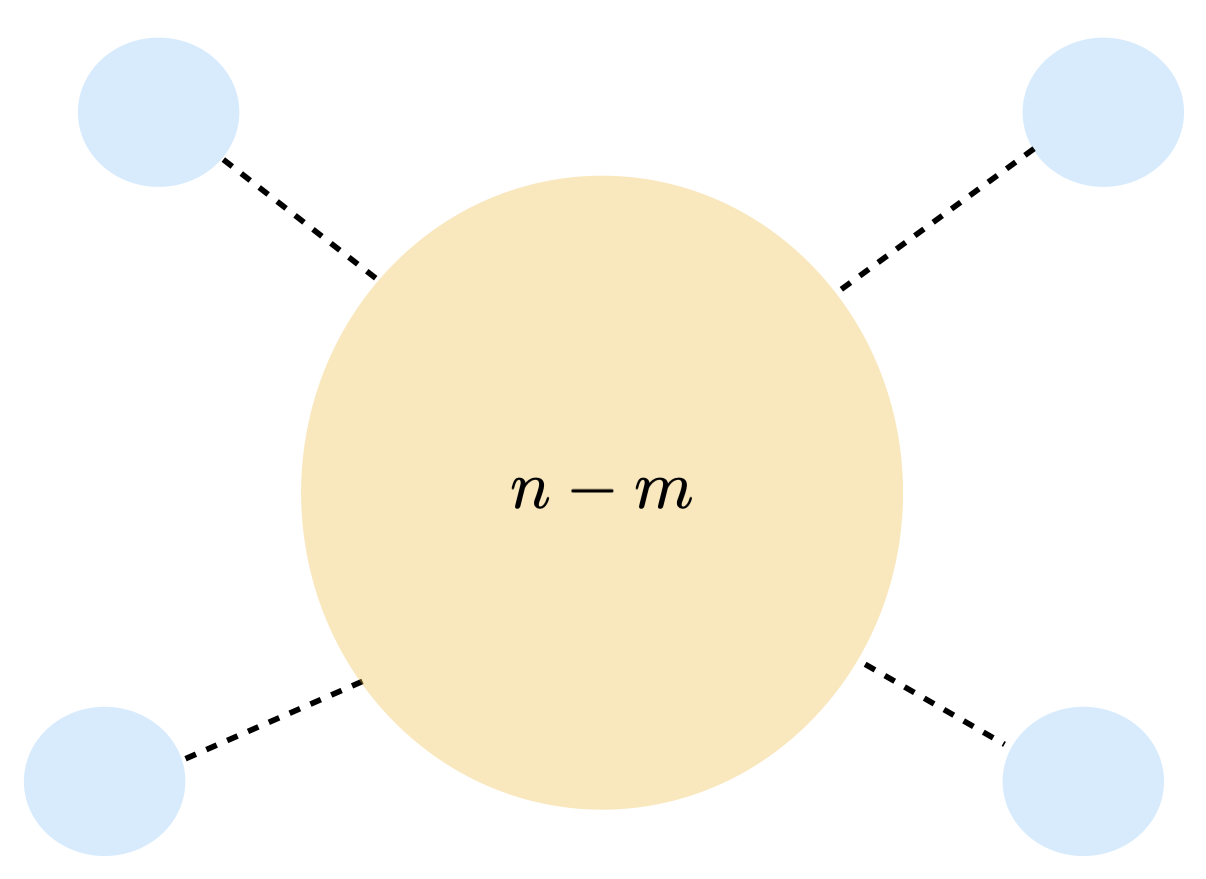}
\caption{A star network with $n-m$ agents which are strongly connected at
the center, and the remaining $m$ agents are connected via weak links to the
core. }
\label{fig:Star_Good}
\end{figure}

The next theorem establishes that the star network, depicted in Figure \ref%
{fig:Star_Good}, achieves the greatest long-run average welfare among all
networks.

\begin{theorem}
\label{thm:star_is_best_1} Given any network $G$ with $n$ nodes, there
exists a star network (shown in Figure \ref{fig:Star_Good}) with the same
number of nodes, that achieves a higher long-run average welfare than $G$,
as $\epsilon \rightarrow 0$ and $\phi \rightarrow 0$ or $\phi \rightarrow
\infty$.

Furthermore, for $\phi <<\lambda <<\gamma $, the average welfare of a star
network approaches $1$ as the number of nodes $n$ and weak links $m$ go to
infinity while $m/n\rightarrow 0$.
\end{theorem}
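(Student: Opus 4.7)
The plan is to leverage the characterization of long-run welfare via the AMC (Theorem \ref{thm:general_MC_result}) together with the equilibrium reduction in Theorem \ref{thm:eq_chara} to convert any network into its effective ``island structure,'' and then argue that the star is extremal among such structures. Assuming $\tau \leq 1/d_{\max}^G$ so that Theorem \ref{thm:eq_chara} applies, all nodes within a strongly-connected component of any graph $G$ play the same action in any BNE. Hence the AMC for $G$ reduces to a Markov chain whose states track (a) the action currently played by each strongly-connected component $C_1,\ldots,C_L$ (of sizes $m_1 \geq \cdots \geq m_L$, with $\sum_i m_i = n$), (b) the identity of the higher-reward action, and (c) the set of dormant weak links. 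By Theorem \ref{thm:general_MC_result}, long-run welfare equals $\sum_q \eta_q^G f(q)$, where $f(q)$ is the fraction of nodes playing the higher-reward action in state $q$.

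For the first part of the theorem, I would explicitly compute (or sharply bound) the welfare of the star with center of size $n-m+1$ and $m-1$ singleton leaves for an appropriately chosen $m$. In the limits $\epsilon \to 0$ and either $\phi \to 0$ or $\phi \to \infty$, the AMC becomes tractable: the relevant macro-states collapse to the conformal states and a small set of diverse configurations indexed by which leaves still disagree with the center. I would then compare this quantity to the welfare bound derived from Proposition \ref{prop:island_weak} for the island reduction of an arbitrary $G$. The comparison hinges on the observation that the star simultaneously maximizes the mass concentrated in the largest component, $m_1/n$, while retaining enough singleton leaves to seed adaptation. Concretely, given any $G$ with component sizes $(m_1,\ldots,m_L)$ and an induced value of $p^G$, I would construct a star with matching or better parameters by choosing $m$ so that the number of singleton leaves matches the effective ``diversity budget'' implicit in $p^G$, while making $m_1$ as large as possible. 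A small amount of casework is needed according to whether $\phi\to 0$ (weak links are effectively one-shot per epoch) or $\phi\to\infty$ (weak links become available immediately), but in both limits the dominant contribution to welfare is governed by the same trade-off identified in Proposition \ref{prop:island_weak}.

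For the second part, under $\phi \ll \lambda \ll \gamma$ and $n,m\to\infty$ with $m/n\to 0$, I would invoke a law-of-large-numbers argument on the $m-1$ leaves. In steady state, each leaf is (approximately) independently in one of the two actions, driven by the history of payoff shocks interacting with past weak-link firings and occasional $\epsilon$-trembles. After a payoff shock, with probability $1-o(1)$ as $m\to\infty$, a positive fraction of leaves already play the new higher-reward action, so at least one such leaf has a dormant weak link to the center that fires within time $O(1/\gamma) \ll 1/\lambda$. The center, being strongly connected, learns the correct action immediately upon that firing via Theorem \ref{thm:eq_chara}. Hence the center (of size $n-m+1$) plays the right action for a $1-o(1)$ fraction of each epoch, and the overall fraction of correctly-playing agents is at least $\tfrac{n-m+1}{n}\cdot(1-o(1)) \to 1$ since $m/n \to 0$.

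The main obstacle is the first part: showing the star dominates \emph{every} network, not just every island network. General graphs can have intricate weak-link topologies whose AMCs are not easily compared directly. My plan to overcome this is a two-stage argument. First, I would show that for fixed strongly-connected component sizes, varying the weak-link topology only affects welfare through $p^G$ and the mean number of leaves reachable per epoch; in particular, there is a ceiling determined by the largest component's size and by $\gamma/\lambda$. Second, for any feasible pair of (component sizes, $p^G$), I would show that shifting mass from intermediate-sized components into one maximal component plus singleton leaves improves the bound in Proposition \ref{prop:island_weak}, because this reallocation strictly increases $m_1/n$ without decreasing the number of ``experimenters'' below the threshold needed to keep $p^G$ at the targeted value. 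Combining these two reductions yields the star as the extremal configuration. The most delicate bookkeeping will be in the $\phi\to 0$ case, where a weak link that has fired once is effectively removed from the pool, so one must argue that the star's large reservoir of singleton-adjacent weak links is not exhausted before a shock occurs.
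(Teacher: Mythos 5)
Your overall architecture (reduce to the island structure via Theorem \ref{thm:eq_chara}, then argue that concentrating mass into one large component plus singleton leaves is extremal) matches the shape of the paper's argument, and your treatment of the second part is essentially the paper's: with $\phi\ll\lambda\ll\gamma$ a weak link from a correctly-playing leaf fires within $O(1/\gamma)\ll 1/\lambda$ of each shock, so the core of size $n-m+1$ is correct for a $1-o(1)$ fraction of epochs and welfare tends to $(n-m+1)/n\to 1$. However, your plan for the first part has a genuine logical gap: you propose to establish dominance by comparing the star's welfare to the \emph{upper bound} of Proposition \ref{prop:island_weak} for an arbitrary $G$, and later to argue that reallocating component sizes ``improves the bound.'' Proposition \ref{prop:island_weak} is explicitly not tight, so showing that a reallocation increases the value of an upper bound tells you nothing about the actual welfare $\mathcal{S}^G$; and showing the star's exact welfare exceeds a loose upper bound on $\mathcal{S}^G$ is a strictly stronger claim than needed and will generally fail. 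You cannot conclude $\mathcal{S}^{G'}\geq\mathcal{S}^{G}$ from $U(G')\geq U(G)$.

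The missing idea that makes the paper's comparison go through is an \emph{invariance} (coupling-type) lemma: for all island networks with $n$ nodes and the same number of components, the stationary distribution $d_k$ of the \emph{number} of components playing the correct action is identical, because in the limits $\epsilon\to 0$, $\phi\to\infty$ each weak-link activation converts exactly one more component to the correct action regardless of the topology. This lets the paper write welfare as $\sum_i \zeta_i^G m_i$ (with $\zeta_i$ the stationary probability that component $i$ is correct, and $\sum_i\zeta_i$ fixed across the class), so that moving all nodes into the component with the largest $\zeta_i$ provably increases the \emph{actual} welfare, not a bound on it. A second direct comparison of transition probabilities into the ``core correct'' macro-state ($\mathbb{P}^{\mathrm{star}}(\mathtt{CG}\mid s)\geq\mathbb{P}^{G}(\mathtt{CG}\mid s)$, since in the star a single activation always informs the core) then singles out the star among networks with one large component and singleton leaves. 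Without some substitute for this invariance step, your two-stage reduction does not close.
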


This theorem establishes two important results. First, a star network (as
defined here) achieves the highest long-run average welfare when
perturbations, given by $\epsilon $ are small, and when transitions of weak
links from inactive to dormant is fast ($\phi $ is large). The these
conditions are both technical and substantive. Substantively, this result
requires $\epsilon -$trembles or mistakes not to be a sufficient source of
(exogenous) diversity. Technically, the limit where $\epsilon \rightarrow 0$
enables us to focus on the case in which all adaptation to a changing
environment comes from agents learning from those who take different
actions. The assumption that either $\phi \rightarrow 0$ or $\phi
\rightarrow \infty $ enables us to focus on the edge cases, where we can
obtain a sharper characterization.

The second part of the theorem shows that, under the sufficient conditions
we impose, average welfare of the star network approaches 1, the highest
feasible payoff in this setting. These conditions require that the number of
nodes to be large relative to the number of weak links (which highlights the
same forces as we emphasized in the previous subsection; we need a
significant fraction of agents to choose the higher-reward action in a
\textquotedblleft diverse" state). For technical reasons, we also send both
the number of nodes and the number of weak links to infinity in this result.
Finally, we also consider the case where $\phi $ is small (though not
necessarily limiting to zero). This condition still ensures that once a weak
link is activated and then becomes inactive, it takes a long time for it to
get out of the inactive state.

We now explain why this property is useful for our result and why the
two-stage activation process for weak links is important for our analysis in
general. First note that without small $\phi $, we can have a situation in
which\ we can start with a network in which all nodes are playing the wrong
action, then the $\epsilon -$ tremble hits one of the agents and the network
moves to a diverse state. Since weak link activation is among agents playing
different actions, it will first pick the agent hit by the $\epsilon -$%
tremble, who will transmit relevant information. Then the next time an
activation takes place, the node hit by the $\epsilon -$tremble has a high
probability of being picked again, and as this happens, the network can
quickly transition to a conformal state again. Introducing the two-stage
activation process, with the backoff period, thus prevents this same node
from being picked in quick succession and helps maintain some amount of
diversity. In other words, this feature enables us to avoid situations where
the network moves to a diverse state and almost immediately moves back to
the conformal state, by having weak links spend a longer duration in the
inactive state.

The proof of this theorem relies on the characterization of average welfare
provided in Theorem \ref{thm:general_MC_result}. We first show that for any
island network $G$ with $k$ components, we can always construct a network in
which one component has size $n-k+1$ and the other $k-1$ components are all
of size $1$. This result thus implies that it is sufficient to restrict
attention to networks that have this special structure. Second, we show that
among all networks with this structure, the star network has the greatest
average welfare. The proof of this step is intuitive and exploits the fact
that the star network achieves the largest component playing the same action
in the middle, while there are sufficient leaves with diversity feeding
information to this middle component. 

\section{Adaptation with Forward Looking Agents}

\label{sec:forward_looking}

We have so far focused on agents that maximize their current (immediate)
payoff, without any weight on future payoffs. If agents are sufficiently
patient, they can themselves engage in experimentation in order to find out
which action is optimal. Although such experimentation issues are important
and interesting, they are beyond the scope of the current paper.\footnote{%
\label{footnote:experimentation_lit}Experimentation over networks is
studied, \emph{inter alia}, in \cite{keller2005strategic}, \cite%
{bonatti2011collaborating, bonatti2017learning}, \cite%
{board2022experimentation}} Nevertheless, it is relevant to investigate
whether forward-looking behavior undoes the main economic forces we have
identified. The next theorem shows that the answer is no, and provided that
agents do not attach too much weight to future payoffs, all of our results
generalize. The bound on the discount factor $\beta $ of the agents depends
on the network structure, and for strongly-connected graphs, it can be
arbitrarily close to 1, as we established next.

The only difference we now consider is that, rather than choosing actions to
maximize current payoffs, as in equation \eqref{eq:action_choice}, each
agent $i\in V$ chooses their action at each time instant to maximize their $%
\beta $-discounted payoff:%
\begin{equation}  \label{eq:ac_disc_t0}
a_{i}(t_{0})=\underset{a\in \{0,1\}}{\func{argmax}}\ \mathbb{E}\left[
\sum_{j=0}^{\infty }\beta ^{j}\mathcal{U}_{i}^{a_{t}(t_{j})}(t)\right] .
\end{equation}
Here, the times $t_i$ are chosen according to a Poisson clock of rate 1.

The next theorem shows that when the discount factor $\beta $ is not too
large, this problem has an identical solution to what we have focused on so
far, thus agents will choose their current-payoff maximizing action and
alter it only at times of new information arrival.

\begin{theorem}
\label{lemma:small_disc_factor}If the discount factor $\beta $ satisfies%
\begin{equation*}
\beta <\frac{\tau d_{\min }}{2+\tau d_{\max }},
\end{equation*}%
then there will be no experimentation and the solution to equation %
\eqref{eq:ac_disc_t0} coincides with 
\begin{equation*}
\underset{a\in \{0,1\}}{\func{argmax}}\ \mathbb{E}_{i,t}[\ \mathcal{U}%
_{i}^{a}(t)].
\end{equation*}%
Consequently, Theorems \ref{thm:eq_chara}-\ref{thm:star_is_best_1} apply
when agents are forward-looking as well.
\end{theorem}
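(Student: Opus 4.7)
The plan is to invoke the one-shot deviation principle: since \eqref{eq:ac_disc_t0} is a stationary discounted infinite-horizon MDP whose state encodes agent $i$'s information set, the action profile of her neighbors, and the dormant/active/inactive status of the weak links incident to $i$, it suffices to show that at every decision epoch $t_0$ and every history, deviating for one period from the myopic best response $a^{\ast}$ to an alternative $a\neq a^{\ast}$ and reverting to myopic play thereafter yields a weakly lower $\beta$-discounted value.

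Write the value gap along such a one-shot deviation as
$$V^{\pi^{\ast}}(t_{0})-V^{\pi}(t_{0})\;=\;\Delta_{0}\;+\;\sum_{j=1}^{\infty}\beta^{j}\,\mathbb{E}\!\left[\mathcal{U}_{i}^{\pi^{\ast}}(t_{j})-\mathcal{U}_{i}^{\pi}(t_{j})\right],$$
where $\Delta_{0}=\mathbb{E}[\mathcal{U}_{i}^{a^{\ast}}(t_{0})-\mathcal{U}_{i}^{a}(t_{0})]\ge 0$ is the current-period loss from deviating. Two bounds are needed. \emph{Step 1} (lower bound on $\Delta_{0}$): decompose $\Delta_0$ into a material component $\mathbb{E}[R_{a^{\ast}}-R_{a}]\in[-1,1]$ and a network component $\tau(f_{i}(a^{\ast},t_{0})-f_{i}(a,t_{0}))$ which is an integer multiple of $\tau$ bounded in magnitude by $\tau d_{i}$; combining this with the BNE structure from Theorem~\ref{thm:eq_chara} (so that the sign of the network gap is determined by the neighbors' common action), the case in which the myopic argmax is a singleton forces the network term to be at least $\tau d_{\min}$ in magnitude and to dominate the material term, giving $\Delta_{0}\ge\tau d_{\min}$. \emph{Step 2} (upper bound on the future gain): each per-period difference $|\mathcal{U}_{i}^{\pi^{\ast}}(t_{j})-\mathcal{U}_{i}^{\pi}(t_{j})|$ is at most $2+\tau d_{\max}$ (full range of the material payoff plus the maximum network payoff); I would then argue that because both trajectories follow the myopic rule from $t_{1}$ onwards and respond identically to any subsequent exogenous information arrival, the trajectories couple after the first subsequent epoch, collapsing the discounted sum to at most $\beta(2+\tau d_{\max})$.

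Combining the two bounds yields $V^{\pi^{\ast}}(t_{0})-V^{\pi}(t_{0})\ge \tau d_{\min}-\beta(2+\tau d_{\max})$, which is strictly positive precisely when $\beta<\tau d_{\min}/(2+\tau d_{\max})$, the threshold in the theorem. Hence no one-shot deviation is ever profitable, the forward-looking argmax coincides with the myopic argmax, and Theorems \ref{thm:eq_chara}--\ref{thm:star_is_best_1} apply verbatim with the forward-looking equilibrium in place of the myopic one.

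The main obstacle is the coupling required in Step 2: a naive geometric summation would give the crude bound $\sum_{j\ge 1}\beta^{j}(2+\tau d_{\max})=\beta(2+\tau d_{\max})/(1-\beta)$, strictly weaker than the stated threshold. Recovering the sharp constant exploits two facts established earlier in the paper, namely that by Lemma~\ref{prop:agent_switch} agents change actions only at times of information arrival, and that by Lemma~\ref{lemma:bayes_mono} Bayesian beliefs are monotone between information arrivals; together these imply that the one-period perturbation at $t_0$ perturbs the myopic choice at $t_1$ only on events that dissipate at the next exogenous arrival, so that the aggregated future discounted gap is dominated by a single term of order $\beta(2+\tau d_{\max})$ rather than its geometric sum.
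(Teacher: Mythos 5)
Your overall strategy---weigh an immediate conformity loss of order $\tau d_{\min}$ against the discounted future informational gain from a one-shot exploration---is the same as the paper's. The genuine problem is your Step 2. The coupling claim there is false: a one-period deviation is fully revealing, so the deviating trajectory knows the currently high-reward action while the non-deviating one may not, and this informational advantage persists (and keeps changing the myopic choices) until the \emph{next exogenous information arrival}, which can be arbitrarily many decision epochs away. Lemma \ref{lemma:bayes_mono} and Lemma \ref{prop:agent_switch} do not make this advantage ``dissipate at the next exogenous arrival''; they control when actions can change, not how long an informational edge lasts. So the future discounted gap cannot be collapsed to a single term of order $\beta(2+\tau d_{\max})$, and your combined bound $\tau d_{\min}-\beta(2+\tau d_{\max})$ is not justified as written.

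The good news is that no coupling is needed; the perceived obstacle comes from your loose per-period bound. The material reward gap is normalized to $1$, so the per-period utility difference between the two trajectories is at most $1+\tau d_{\max}$, not $2+\tau d_{\max}$. The naive geometric sum then bounds the future gain by $\frac{\beta}{1-\beta}(1+\tau d_{\max})$, and the inequality $\tau d_{\min}>\frac{\beta}{1-\beta}(1+\tau d_{\max})$ is equivalent to $\beta<\tau d_{\min}/(1+\tau d_{\max}+\tau d_{\min})$, which is implied by the stated threshold $\beta<\tau d_{\min}/(2+\tau d_{\max})$ because $\tau d_{\min}\le\tau d_{\max}\le 1$ in the regime $\tau\le 1/d_{\max}$ of Theorem \ref{thm:eq_chara}. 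This is exactly the paper's calculation. A smaller remark: your Step 1 bound $\Delta_{0}\ge\tau d_{\min}$ only holds when the two actions have equal expected material payoff (belief $1/2$) and all neighbors play the same action; the ``singleton argmax'' justification you give does not deliver it in general, since the myopic gap can be arbitrarily close to zero when the belief is interior. The paper's proof restricts attention to exactly the belief-$1/2$ case, so you share this limitation rather than introduce it, but you should state the restriction explicitly rather than derive it from the argmax being a singleton.
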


Notice that a higher $\tau $ translates into a higher bound on $\beta $.
This is intuitive. An agent has stronger incentives to conform to her
(strongly-connected) neighbors' actions when local interactions matter more
for payoffs relative to potential gains from individual experimentation. In
fact, returning to the second part of Theorem \ref{thm:eq_chara}, we can see
that for strongly-connected networks or regular networks, We can choose $%
\tau $ sufficiently large to make the bound on $\beta $ sufficiently close
to 1.

Conversely, however, one can also show that there are network and discount
factor combinations for which individual agents would like to experiment,
but we leave further exploration of such situations to future work.


\section{Discussion and Conclusions\label{sec:discussion}}

Diversity facilitates adaptation in both biological and social systems. In
biology, a diverse population is more likely to have sufficient genetic
variation to produce successful strategies against invasions by new species,
food shortages or new climatic conditions. In social systems, diversity can
enable a faster detection of changes in the environment and facilitate
appropriate responses. But, in social networks diversity is even more
difficult to maintain. This is because diversity is essentially a way of
exploring or experimenting with different strategies, while individual
agents tend to have an incentive to exploit the higher payoffs of currently
high-reward actions. When a social network enables fast transmission of
information, however, diversity and sufficient experimentation are
exceedingly difficult to maintain, because all agents learn what is the
currently optimal action and tend to gravitate towards it.

In this paper, we formalized the tension between diversity and exploiting
the high payoff actions. In our model, a collection of Bayesian agents form
beliefs about an underlying state and choose their actions in order to
maximize the sum of the material payoff (coming from matching the underlying
state correctly) and a network payoff (related to coordinating with other
agents one is strongly linked to, such as family, kin group, close friends
or coworkers). Our formal analysis provides a characterization of the
Bayesian-Nash equilibria of this game and provides an explicit formula for
determining long-run average payoffs over the social network.

One of our major results establishes that in a network consisting of just
strong links, long-run average payoffs are approximately the same as
everybody randomly choosing their action. The reason for this is that, with
strong links, any information transmission is fast, so all strongly-linked
components converge to the same action. Once this happens, learning that the
underlying state has changed takes place very slowly, and in the long run,
an initial best action is as likely to be wrong as it is to be right,
yielding approximately the same payoffs as random choice.

Our main results establish that much higher payoffs can be obtained when
strong links and weak links are combined. Weak links, as envisaged by (\cite%
{granovetter1977strength}), involve more infrequent and less tight
interactions than strong links, and in our setting, they transmit
information more slowly (intermittently) than strong links. When weak links
are combined with a structure of strong links that generates multiple
distinct (disconnected) communities, there is room for sufficient diversity.
Each strongly-connected component will still play the same action, but
different components can pursue diverse actions. When the underlying state
changes, one of those components will discover it rapidly. Then this
information will be transmitted to the rest of society via weak links. We
show how a network consisting of strongly-connected islands that are
themselves weakly connected to each other achieves this balance and
consequently much higher payoffs than networks consisting of just strong
links. Social networks that achieve the greatest long-run average welfare
are those that have a star-like structure, whereby a large strongly-linked
component in the middle is fed information from much smaller leaves via weak
links.

Our main results are established for agents who are Bayesian but maximize
current payoffs, which precludes individual incentives for experimentation.
In the final part of the paper, we demonstrate that our results extend to
forward-looking Bayesian agents, provided that their discount factor is less
than a certain threshold, so that they do not have incentives to
individually experiment. This threshold itself crucially depends on local
interactions.

Our paper raises several questions left for future work, and we end with a
brief discussion of a few of these.

\begin{itemize}
\item Incorporating more forward-looking agents who will engage in some
amount of experimentation, though without taking the full social benefits of
diverse actions into account, is one important area of research. This would
amount to combining insights from the emerging literature on experimentation
over networks (such as the works cited in footnote \ref%
{footnote:experimentation_lit}) with those emphasized in this paper, which
focus on issues of adaptation to changing environments.

\item Relatedly, our analysis was simplified by assuming that agents have
identical preferences. Preference heterogeneity generates additional
diversity, and combining this with our overall framework would be another
important direction for theoretical inquiry.

\item In addition to future theoretical work, it would be interesting to
empirically evaluate the linkages between adaptation and diversity in social
systems. As our discussion in the Introduction illustrated, a growing body
of work in biology documents the adaptation benefits of diversity. How the
magnitude and the mechanics of these benefits differ in social settings is a
major question for future research.

\item Lastly, and more broadly, our paper provides one example of how
improved information exchange and communication in a social setting may
generate adverse consequences---in this instance, because it harms diversity
and adaptation. As communication technologies continue to improve at a
breakneck pace, whether unforeseen consequences in terms of diversity,
excessive conformity and adaptation will follow is an important and
multifaceted question that deserves serious study.
\end{itemize}


\newpage 
\bibliographystyle{chicago}
\bibliography{biblio}
{} 


\newpage

\section*{Appendix}

\renewcommand{\theequation}{\mbox{A\arabic{equation}}}

\renewcommand{\thesection}{\mbox{A\arabic{section}}}

\setcounter{equation}{0}

\setcounter{section}{0}

\renewcommand{\thelemma}{\mbox{A\arabic{lemma}}}

\setcounter{lemma}{0}

\renewcommand{\thecorollary}{\mbox{A\arabic{corollary}}}

\setcounter{corollary}{0}

We prove the results stated in the main paper in the appendix. First, we
have the following lemma which will be used in several proofs that follow:

\begin{lemma}
\label{lemma:prob_switch_random} Consider a network with $n$ agents. Assume
that all agents are initialized to play Action $0$, and we have $r_{1}>r_{0}$%
, so that Action $1$ has higher material payoff. Then the probability that
there will be at least one agent in the graph which learns about Action $1$,
before the payoff shock, is given by: 
\begin{equation*}
q=\frac{\epsilon}{\lambda +\epsilon}.
\end{equation*}
\end{lemma}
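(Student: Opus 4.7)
The plan is to observe that, given the initial condition and the definition of $\mathcal{W}(t)$, the only mechanism by which any agent can discover that Action $1$ is superior before the next payoff shock is an $\epsilon$-tremble, and then reduce the probability to a standard competing-exponentials calculation.

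First I would argue that neither strong nor weak links can serve as a channel of information while every agent still plays Action $0$. Strong links transmit the actions and realized payoffs of neighbors, but every strongly-linked neighbor also plays Action $0$, so no new information about Action $1$ is conveyed. For weak links, recall from Section~\ref{subsec:strong_weak_links} that the activation clock of rate $\gamma$ selects an edge uniformly from $\mathcal{W}(t)=\{(i,j)\in\mathcal{E}_w^G:a_i(t)\neq a_j(t),\,(i,j)\text{ dormant}\}$; when the entire population plays Action $0$, this set is empty, so no weak link can be activated and no weak link transmits information. Hence, until some agent plays Action $1$, every firing of the $\gamma$-clock is vacuous, and the only way to break the uniform profile is through an $\epsilon$-tremble.

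Second, I would note that when the $\epsilon$-tremble clock fires, it selects one agent uniformly at random and flips her intended Action $0$ into Action $1$. Since $r_1 > r_0$, that agent immediately receives the higher material reward and, by Definition~\ref{def:Last_New_Info_Time}, this is a time of information arrival for her, so she learns that Action $1$ is superior. Therefore the event ``at least one agent learns about Action $1$ before the next payoff shock'' coincides exactly with the event ``the first $\epsilon$-tremble (counted from $T_0=0$) arrives before the first payoff shock.''

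Third, the tremble clock (rate $\epsilon$) and the payoff-shock clock (rate $\lambda$) are independent Poisson processes by construction of the model. The standard competing-exponentials identity then gives
\begin{equation*}
\mathbb{P}(\text{first tremble before first payoff shock}) \;=\; \frac{\epsilon}{\lambda + \epsilon},
\end{equation*}
which equals the claimed $q$. There is essentially no obstacle in the argument; the only subtle step is justifying that weak links cannot fire when $\mathcal{W}(t)=\emptyset$, which follows immediately from the uniform selection rule on $\mathcal{W}(t)$ in the information structure.
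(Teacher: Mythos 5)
Your proposal is correct and follows essentially the same route as the paper's proof: identifying the $\epsilon$-tremble as the only information channel from the all-Action-$0$ profile and then computing $\mathbb{P}(X<Y)=\epsilon/(\lambda+\epsilon)$ for the two independent exponential clocks. The only difference is that you spell out explicitly why weak links cannot fire when $\mathcal{W}(t)=\emptyset$, a step the paper leaves implicit.
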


\begin{proof}
The proof is very straightforward. The random shock happens when the Poisson
clock of rate $\epsilon$ ticks, whereas the payoff shock happens when an
independent Poisson clock of rate $\lambda$ ticks.

Let $X$ denote the random time before the next random switch, and $Y$ denote
the random time before the next payoff shock. Therefore, at least one node
learns about Action 1 before the payoff shock if $X < Y$. However, since
these are independent Poisson clocks, we have that $X \sim \exp(\epsilon)$
and $Y \sim \exp(\lambda)$ and $X$ and $Y$ are independent. Therefore, 
\begin{align*}
q = \mathbb{P} (X < Y) = \frac{\epsilon}{\lambda +\epsilon}.
\end{align*}
which completes the proof.
\end{proof}

\subsection*{Proof of Lemma \protect\ref{lemma:bayes_mono}}

Here, we prove the first part of the lemma. Note that the second part
follows from exactly the same argument as we just redefine belief to be
probability associated with Action 0 being the action with the higher
material payoff. First, note that at time $T_i(t)$, we have $\mu_i(T_i(t)) =
1$. This means that at time $T_i(t)$, node $i$ knows that Action 1 has a
higher material payoff than Action 0 with probability 1.

Now, the belief update of an agent $i$ can be decomposed into two parts:

\begin{itemize}
\item Update due to knowledge of the dynamics of the environment ($%
\mu_i^e(t) $): This captures the belief update due to a payoff shock.

\item Update due to interaction with neighbors ($\mu_i^n(t)$): This captures
the belief update of node $i$, if one of her neighboring nodes has 'learnt'
about the other action.
\end{itemize}

The belief update due to knowledge of the environment evolves as: 
\begin{align}
\mu_i^e(t) &= \left( 1 - e^{-\lambda (t - T_i(t)} \right) \frac{1}{2} +
e^{-\lambda (t - T_i(t))}\mu_i(T_i(t)).  \notag
\end{align}
This can be easily seen as follows: With probability $e^{-\lambda (t -
T_i(t))}$ there has been no payoff shock between times $T_i(t)$ and $t$, and
the belief remains $\mu_i(T_i(t))$. With the remaining probability, there
have been one or more payoff shocks, in which case both actions are equally
likely to be the ones with the higher reward. In this case, the belief is $%
1/2$. The crucial property to note here is that $\mu_i^e(t) \geq 1/2$.

The main hurdle in characterizing the exact belief update $\mu _{i}(t)$, is
in writing the explicit form of the belief $\mu _{i}^{n}(t)$. Note that as
agent $i$ constantly interacts with her neighbors, the fact that there has
been no new information from time $T_{i}(t)$ to $t$ would mean that
potentially, some node maybe explored the other action (either through an $%
\epsilon-$flip or through the activation of a weak link), realized that it
wasn't better, and so continued playing the current higher reward action. In
other words, the fact that no new information about the other action was
received by node $i$ from time $T_{i}(t)$ to $t$ should reinforce the fact
that Action 1 is the higher reward action at time $t$ as well. This is the
only property we need to prove monotonicity of beliefs. More formally, we
can write the belief at time $t$ as: 
\begin{equation*}
\mu _{i}(t)=\int_{T_{i}(t)}^{t}\left[ \mu _{i}^{n}(\tau )\mu
_{i}(T_{i}(t))+(1-\mu _{i}^{n}(\tau ))\mu _{i}^{e}(\tau )\right] d\tau .
\end{equation*}%
The final step of the proof easily follows by noticing that the term inside
the integrand is always greater than $1/2$ (since $0\leq \mu _{i}^{n}(\tau
)\leq 1$ and $\mu_i^e(\tau) \geq 1/2$ for all $\tau \in (T_{i}(t), t) $).

\subsection*{Proof of Lemma \protect\ref{prop:agent_switch}}

We prove this lemma using contradiction. For simplicity assume that $%
a_i(t^-) = 1$, i.e., Node $i$ plays Action 1 at time $t^-$. This means that 
\begin{align}  \label{eq:time_minus}
\mathbb{E}_{i,t^-} [R_1(t-)] + \tau f_i(1,t^-) \geq \mathbb{E}_{i,t^-}
[R_0(t-)] + \tau f_i(0,t^-).
\end{align}

Now, since $t$ is not a time of information arrival, the belief $\mu_i(t)$
will be continuous at time $t$. In particular, this means that we can take
the limit as $t^- \rightarrow t$ in Equation \eqref{eq:time_minus}, and the
sign of the inequality holds. This implies $\mathbb{E}_{i,t} [R_1(t)] + \tau
f_i(1,t) \geq \mathbb{E}_{i,t} [R_0(t)] + \tau f_i(0,t)$ which shows that
Agent $i$ will continue to play Action 1 at time $t$.

Therefore, if node $i$ changes her action at time $t$, this must mean that $%
t $ is a time of new information for node $i$. This completes the proof.

\subsection*{Proof of Theorem \protect\ref{thm:eq_chara}}

Recall that an agent $i$ chooses an action $a$ at time $t$ in order to
maximize $\mathbb{E}_{i,t}[R_a] + \tau f_i(a,t)$. Suppose Agent $i$ plays
Action $0$ and one of her neighbors $j$ plays Action $1$. In this case, both
agents exactly know which action has the higher material payoff. Suppose
Action 1 has the higher payoff. This would mean that Agent $i$ is playing
the action with the lower material payoff.

However, if Agent $i$ switched to Action $1$, her utility would be $1 + \tau
f_i(1,t) > 0 + \tau f_i(0,t)$ since $\tau \leq 1/d_{\min}$. Therefore, Agent 
$i$ would play Action 1 and not 0. This shows that Agent $i$ and all her
neighbors must play the same action. Extending the same argument to all
nodes which are connected to Agent $i$, we have the first part of the
theorem.

The second part follows from the fact that all nodes are initialized to play
the same action. Suppose all nodes are initialized to play Action 0.
Consider Node $i$. All its neighbors are playing Action 0. Now, suppose node 
$i$ has an $\epsilon-$flip at time $t$ and learns that Action 1, in fact has
the higher material payoff. Then, we have $\mathbb{E}_{i,t}[R_1 - R_0] = 1$.
However, since $\tau > 1/d_{\min}$, we have $\tau f_i (0,t) > 1$ (since all
neighbors are also playing Action 0. Therefore, even though node $i$ knows
that Action 1 has the higher material payoff, she continues to play Action
0, since all her neighbors are playing Action 0. This completes the proof.

\subsection*{Proof of Theorem \protect\ref{thm:general_MC_result}}

We first prove that the Markov chain is both irreducible and aperiodic.

\begin{lemma}
\label{lemma:aperiodic_irreducible} The Activation Markov Chain in
Definition \ref{def:weak_link_MC} is both irreducible and aperiodic for any
graph ${G}$
\end{lemma}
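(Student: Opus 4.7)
The plan is to verify the two conditions separately, exploiting the rich structure of the underlying continuous-time dynamics.

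\textbf{Aperiodicity.} I would establish aperiodicity by exhibiting a single state with a positive self-loop probability. Take $s^{\ast}=(\mathbf{0},0,\mathcal{E}_{w}^{G})$, where every agent plays Action $0$, Action $0$ is currently the higher-reward action, and every weak link is dormant (this is a valid BNE configuration by Theorem \ref{thm:eq_chara}). The probability of returning to $s^{\ast}$ in one transition is bounded below by the joint event that (i) the next reward draw again assigns Action $0$ the higher payoff, probability $1/2$; and (ii) no $\epsilon$-tremble occurs before the next payoff shock, probability $\lambda/(\lambda+\epsilon)$ by independence of the exponential clocks. Throughout such an epoch every agent plays the same action, so no weak link can be activated and $B$ remains $\mathcal{E}_{w}^{G}$. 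Hence $\mathbb{P}(s^{\ast}\mid s^{\ast})\geq \lambda/[2(\lambda+\epsilon)]>0$, which together with irreducibility delivers aperiodicity.

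\textbf{Irreducibility.} The key observation is that within a single epoch (of random length $\mathrm{Exp}(\lambda)$), any prescribed finite sequence of events---$\epsilon$-trembles at named agents, activations of specified dormant mismatched weak links, and inactive-to-dormant transitions of specified links---has strictly positive probability of occurring in the prescribed order before the next payoff shock, since $\lambda,\epsilon,\gamma,\phi$ are all strictly positive and the underlying clocks are independent. Using this building block, for any two states $s=(P,R,B)$ and $s'=(P',R',B')$ in the (finite) state space, I would construct a finite concatenation of epochs taking $s$ to $s'$ with positive probability. The reward coordinate is reset to $R'$ independently with probability $1/2$ at any chosen shock. The profile coordinate is driven to any target BNE profile $P'$ by sequentially trembling one agent inside each strongly-connected component whose current action disagrees with $P'$: by Theorem \ref{thm:eq_chara} (using $\tau\leq 1/d_{\max}$), once the tremble reveals the true higher-reward action to that component, the entire component realigns on the optimal action, so we can toggle each component's action independently. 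Finally, the dormant set $B$ is reshaped by orchestrating activations, which remove a link from $B$ by routing it through the inactive state, together with rate-$\phi$ inactive-to-dormant transitions, which restore links to $B$.

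\textbf{Main obstacle.} The most delicate step is controlling \emph{which} weak link is activated, because activations are drawn uniformly from the current set $\mathcal{W}(t)$ of dormant mismatched links rather than being freely targetable. To force activation of a designated link $e$, the plan is to first use trembles to arrange that $e$ is the \emph{unique} dormant mismatched link at the moment the rate-$\gamma$ clock ticks, and then use further trembles and the subsequent BNE adjustments to restore both the desired action profile and to leave the remaining weak links in their desired dormant or inactive states by the end of the epoch. Since the state space is finite, only a bounded number of such orchestration steps is required; each step has positive probability, so their product is positive, yielding irreducibility and, combined with the self-loop above, aperiodicity.
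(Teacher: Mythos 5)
Your proposal is correct and follows essentially the same route as the paper: aperiodicity comes from the positive probability that the reward is redrawn to the same value and no tremble or weak-link event occurs before the next payoff shock, and irreducibility from concatenating epochs in which a prescribed sequence of trembles, activations, and inactive-to-dormant transitions occurs with positive probability. If anything, you are more careful than the paper about steering the dormant-set coordinate $B$ and about the uniform selection of the activated link, details the paper's path through the all-conformal states simply asserts to have positive probability.
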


\begin{proof}
Consider the state at time $k$ denoted $(P_k, R_k, B_k)$. We have (note that
we drop the superscript $G$ on $\mathbb{P}(\cdot)$ for convenience): 
\begin{align}
\mathbb{P} (({P}_{k+1}, {R}_{k+1}, B_{k+1}) &= (P_k, R_k, B_k) | (P_k, R_k,
B_k) )  \notag \\
&= \frac{1}{2} \times \mathbb{P}(\text{No weak link activation or $\epsilon-$%
flip before shock})  \notag \\
&> 0.  \notag
\end{align}
Therefore, for any state in the Markov Chain, there is a positive
probability of staying in the same state. This shows that the Markov chain
is aperiodic.

Let $\bar{P}_z = \{ z \}^n$ for $z \in \{ 0, 1 \}$ be the action vector
where all agents play the action $z$. Now, consider 2 states, $(P_1, R_1,
B_1)$ and $(P_2, R_2, B_2)$. We show that there is a path of positive
probability between these two states. This can be easily seen as follows
(here $B$ is any subset of weak links): 
\begin{align}
\mathbb{P} ( (\bar{P}_{R_1}, R_1, B_1) | (P_1, R_1, B_1) ) &> 0  \notag \\
\mathbb{P} ( (\bar{P}_{1 - R_1}, 1 - R_1, B_2) | (\bar{P}_{R_1}, R_1, B_1))
&> 0  \notag \\
\mathbb{P} ( (\bar{P}_{1 - R_2}, 1 - R_2, B_2) | (\bar{P}_{1 - R_1}, 1 -
R_1, B_2)) &> 0  \notag \\
\mathbb{P} ( (P_2, R_2, B_2) | (\bar{P}_{1 - R_2}, 1 - R_2, B_2) ) &> 0 
\notag
\end{align}

The first inequality can be seen as follows: 
\begin{align}
& \mathbb{P} ( (\bar{P}_{1 - R_2}, 1 - R_2, B_1) | (P_1, R_1, B_1) )  \notag
\\
& \geq \frac{1}{2}\times \mathbb{P}(\text{Every agent learns the right
action by random flipping before shock})  \notag \\
& >0.  \notag
\end{align}%
The same argument can be used to establish the next inequality. Therefore,
there is a positive probability of moving from any state in this Markov to
chain to any other state. This shows that the Markov Chain in Definition \ref%
{def:weak_link_MC} is irreducible, thereby completing the proof.
\end{proof}

Now, from Lemma \ref{lemma:aperiodic_irreducible}, we know that the Markov
chain is both aperiodic and irreducible. Therefore, since it also has
finitely many states, it has a unique stationary distribution (see for
example \cite{aldous1995reversible}). Thus the stationary distribution $\eta
_{i}^{{G}}$ is well defined. Now, from \cite{aldous1995reversible}, we know
that the limiting behavior of the Markov chain can be characterized by its
ergodic behavior and therefore, we have 
\begin{equation*}
\lim_{k\rightarrow \infty }\mathcal{S}_{k}^{{G}}=\mathbb{E}_{\eta ^{{G}}}[f],
\end{equation*}%
which completes the proof of the theorem.

\subsection*{Proof of Theorem \protect\ref{thm:final_1}}

In the case of a general graph the only possible equilibria are either all
agents play the action with the higher reward or all agents play the lower
reward action (from Theorem \ref{thm:eq_chara}, since $\tau \leq d_{\max}$).
we provide an upper bound for the fraction of agents playing the right
action. We approximate the Markov chain described in Section \ref%
{sec:gen_analysis} with the following 2 state Markov Chain.

\begin{itemize}
\item $\mathtt{G}$ - where all agents play the action with the higher reward.

\item $\mathtt{B}$ - where all agents play the action with the lower reward.
\end{itemize}

The transition probabilities of this 2 state Markov chain is given by: 
\begin{equation*}
\mathbb{P}(\mathtt{G}|\mathtt{B})=\frac{1}{2}(1+q),\qquad \mathbb{P}(\mathtt{%
B}|\mathtt{G})=\frac{1}{2}(1-q).
\end{equation*}%
Here $q$ denotes the probability that some agent will learn about the better
action through a random flip, as derived in Lemma \ref%
{lemma:prob_switch_random}.

\noindent Using these transition probabilities, we have (here $\eta _{i}$
denotes the stationary distribution at state $i$): 
\begin{equation*}
\eta _{\mathtt{B}}=\frac{1}{2}(1-q)\qquad \eta _{\mathtt{G}}=\frac{1}{2}%
(1+q).
\end{equation*}%
Using Theorem \ref{thm:general_MC_result}, this leads to the average welfare
of any connected graph with only strong links given by: 
\begin{equation*}
\mathcal{S}^{{G}}=\frac{1}{2}(1+q).
\end{equation*}%
Now, substituting the value of $q$ from Lemma \ref{lemma:prob_switch_random}%
, we get the final result.

\subsection*{Analysis of Figure \protect\ref{fig:Link_Comparison_Simple}}

For ease of exposition, we consider the limiting behavior when $\phi
\rightarrow \infty $, whereby the weak link is either active or dormant at
all times. From Theorem \ref{thm:final_1}, we know that the average welfare
of Network 1 is given by: 
\begin{equation*}
\mathcal{S}^{{G}_{1}}=\frac{1}{2}\left( 1+\frac{\epsilon }{\epsilon +\lambda 
}\right) .
\end{equation*}

Next, we compute the average welfare of Network 2. There are three possible
states: (i) \texttt{G} (both nodes play the action with higher reward) (ii) 
\texttt{B} (both nodes play the action with lower reward) and (iii) \texttt{M%
} (exactly one node plays the action with higher reward). Consider the
transition probabilities to state $\mathtt{B}$: 
\begin{equation*}
\mathbb{P}(\mathtt{B}|\mathtt{G})=\frac{1}{2}\times (1-q),\qquad \mathbb{P}(%
\mathtt{B}|\mathtt{B})=\frac{1}{2}\times (1-q),\qquad \mathbb{P}(\mathtt{B}|%
\mathtt{M})=0.
\end{equation*}

This shows that the relation between the steady state probabilities of 
\texttt{G} and \texttt{B} is given by: 
\begin{equation*}
\eta _{B}^{G_{2}}=\eta _{G}^{G_{2}}\times \frac{1-q}{1+q}.
\end{equation*}%
%

Now, the average welfare of the second network is given by: 
\begin{align}
\mathcal{S}^{{G}_{2}}& =\eta _{G}^{G_{2}}+\frac{1}{2}\eta _{M}^{G_{2}} 
\notag \\
& =\eta _{G}^{G_{2}}+\frac{1}{2}(1-\eta _{G}^{G_{2}}-\eta _{B}^{G_{2}}). 
\notag
\end{align}

From here, it is easy to see that $G_{2}$ has a lower welfare than $G_{1}$.
Suppose $\eta _{G}^{G_{2}}=\eta _{G}^{G_{1}}-t$ for some $t>0$, we have: 
\begin{equation*}
\mathcal{S}^{{G}_{1}}-\mathcal{S}^{{G}_{2}}=t-\frac{t}{2}\times \left( 1+%
\frac{1-q}{1+q}\right) >0.
\end{equation*}%
This completes the proof. 

\subsection*{Proof of Proposition \protect\ref{prop:island_weak}}

The proof follows from a carefully designed Markov chain to substitute in
Theorem \ref{thm:general_MC_result}. Consider the Markov chain with the
following states:

\begin{itemize}
\item $\mathtt{G}$ - where all agents play the action with the higher reward.

\item $\mathtt{B}$ - where all agents play the action with the lower reward.

\item $\mathtt{M}$ - where there is at least one node which playing Action
0, and at least one node playing Action 1.
\end{itemize}

The transition probabilities between these states is given by (we ignore
terms which involve higher powers of $\epsilon$. Also, we have used Lemma %
\ref{lemma:prob_switch_random} to subsitite the value of $q$):

\begin{align}
\mathbb{P}(\mathtt{M}|\mathtt{G})=\frac{1}{2}\times \frac{\epsilon }{%
\epsilon +\lambda },& \qquad \mathbb{P}(\mathtt{B}|\mathtt{G})=\frac{1}{2}%
\times \frac{\lambda }{\epsilon +\lambda }  \notag \\
\mathbb{P}(\mathtt{M}|\mathtt{B})=\frac{1}{2}\times \frac{\epsilon }{%
\epsilon +\lambda },& \qquad \mathbb{P}(\mathtt{B}|\mathtt{B})=\frac{1}{2}%
\times \frac{\lambda }{\epsilon +\lambda }  \notag \\
\mathbb{P}(\mathtt{M}|\mathtt{M})=1-p^{G},& \qquad \mathbb{P}(\mathtt{B}|%
\mathtt{M})=0.  \notag
\end{align}%
Using these transition probabilities, we get the stationary distribution 
\begin{equation*}
\eta _{M}^{G}=\frac{1}{1+\frac{2\lambda p^{G}}{\epsilon }+2p^{G}},\qquad
\eta _{G}^{G}=\frac{\frac{\lambda p^{G}}{\epsilon }+2p^{G}}{1+\frac{2\lambda
p^{G}}{\epsilon }+2p^{G}}.
\end{equation*}%
Now, in order to derive an upper bound on the average welfare, all that is
left to do is to find an upper bound on the average welfare in the diverse
state (we denote it as $\mathcal{S}_{M}^{G}$).

First, note that if there was no weak link activations, the average welfare
would just be $\approx 1/2$ (since we average out over payoff shocks).
However, on average, there are $\gamma /\lambda $ weak link activations
every epoch. Now, in order to derive an upper bound, we assume that these
weak link activations inform the largest components of the island network.
Therefore, we can upper bound the average welfare in a diverse state as: 
\begin{equation*}
\mathcal{S}_{M}^{G}\leq \frac{1}{2}+\frac{\sum_{i=1}^{\min \{\lceil \gamma
/\lambda \rceil ,k\}}m_{i}}{\sum_{i=1}^{k}m_{i}}.
\end{equation*}%
Therefore, using the fact that the average welfare of the graph can be
written as $\mathcal{S}^{G}=\eta _{G}+\eta _{M}\times \mathcal{S}_{M}^{G}$,
we complete the proof.

\subsection*{Proof of Theorem \protect\ref{thm:star_is_best_1}}

We first prove the part of the theorem which says that a star network is
optimal.

Note that the case where $\phi \rightarrow 0$ follows easily. This case
corresponds to the situation where once a weak link transmits information,
the link becomes inactive and can never be used again. Since the star
network with the middle component having the maximum number of nodes
corresponds to the case where the maximum number of nodes have access to
information flowing through weak links, it has the highest probability of
learning from a weak link, and therefore will have the highest welfare. We
next focus on the case where $\epsilon \rightarrow 0$, and $\phi \rightarrow
\infty$. In particular, this means that all weak links are always dormant or
active, and furthermore, since $\epsilon$ is negligibly small, we only have
to consider its affect when transitioning from a conformal state (since the $%
\epsilon$ shocks are the only way to get out of these states).

We first define a few quantities which will be useful to present our
results. Note that the results are based on Theorem \ref%
{thm:general_MC_result}, for which we need to characterize the Markov Chain
Definition \ref{def:weak_link_MC}. First, we define: 
\begin{equation*}
\mathcal{H}_{n,m}=\{\text{Island networks with $n$ nodes and $m$ components}%
\}.
\end{equation*}

Every graph in $\mathcal{H}_{n,m}$ consists of $m+1$ islands of strongly
connected agents which are connected through weak links. Define 
\begin{equation*}
d_{k}^{G}=\sum_{\text{states $i$ where exactly k components play the right
action}}\eta _{i}^{G}.
\end{equation*}%
In words, $d_{k}^{G}$ represents the steady state probability that exactly $%
k $ components play the action with the higher material payoff. Now, we have
the following property for $d_{k}^{G}$ for all graphs in the class $\mathcal{%
H}_{n,m}$.

\begin{lemma}
\label{lemma:helper_1} For all graphs $G \in \mathcal{H}_{n,m}$, we have: 
\begin{align}
d_k^G = d_k,  \notag
\end{align}
i.e., the distribution of the number of components playing the right action
is the same for all graphs in class $\mathcal{H}_{n,m}$.
\end{lemma}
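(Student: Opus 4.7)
The plan is to lump the Activation Markov Chain of Definition~\ref{def:weak_link_MC} by the statistic $k$, defined as the number of islands currently playing the correct (higher-reward) action, and to show that the lumped process is itself a Markov chain whose transition kernel depends only on $m$ and the Poisson rates $(\lambda, \gamma, \epsilon)$, not on the specific $G \in \mathcal{H}_{n,m}$. Since the Activation Markov Chain admits a unique stationary distribution by Theorem~\ref{thm:general_MC_result}, identical lumped kernels force identical marginals $d_k^G = d_k$.

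The first step is to invoke Theorem~\ref{thm:eq_chara} (applicable in the setting of Proposition~\ref{prop:island_weak} since $\tau \leq 1/d_{\max}$) to collapse each strongly-connected island into a super-node taking an action in $\{0,1\}$; the relevant state reduces to the island-action vector together with the identity of the correct action and the dormant/inactive status of weak links. I would then verify that the transitions of $k$ depend only on $k$ itself. Two of the three transition types are clean. At each payoff shock (rate $\lambda$), the correct action is re-drawn uniformly, yielding $k \mapsto k$ with probability $1/2$ and $k \mapsto m - k$ with probability $1/2$, independently of $G$. At each weak-link activation (rate $\gamma$), a link is drawn uniformly from $\mathcal{W}(t)$; by definition both endpoints lie in distinct-action islands, so one endpoint is correct and one is incorrect, and by Theorem~\ref{thm:eq_chara} both full islands then converge to the correct action, giving $k \mapsto k+1$ deterministically. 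Since the Poisson clock fires at rate $\gamma$ regardless of $|\mathcal{W}(t)|$, the rate of this transition equals $\gamma$ whenever $\mathcal{W}(t) \neq \emptyset$; and in the $\phi \to \infty$ regime in which the lemma is to be applied (see Theorem~\ref{thm:star_is_best_1}), weak links are never in the inactive state, so $\mathcal{W}(t) \neq \emptyset$ iff $0 < k < m$---a condition depending only on weak-connectivity of the inter-island graph, never on which specific weak links are present.

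The main obstacle is the $\epsilon$-tremble transitions, whose rates could in principle depend on $G$ through the island sizes, since a tremble selects an agent uniformly from all $n$ nodes and is thus more likely to hit a larger island. To address this I would work in the $\epsilon \to 0$ limit used in Theorem~\ref{thm:star_is_best_1}: in that regime trembles are needed only to supply the escape rate from the two conformal states $k = 0$ and $k = m$. Since the total tremble rate is exactly $\epsilon$ regardless of which agent is hit, and since any such tremble feeds enough new information through fast strong-link transmission within the hit island (and through subsequent weak-link activations) to push the lumped chain off the conformal state, the leading-order escape rate from each conformal state is $\epsilon$ and is independent of $G$. Putting the pieces together, every entry of the lumped transition kernel is a function of $(m, \lambda, \gamma, \epsilon)$ alone; irreducibility and aperiodicity of the lumped chain on $\{0, 1, \ldots, m\}$ follow by arguments identical to Lemma~\ref{lemma:aperiodic_irreducible}, and its unique stationary distribution yields $d_k^G = d_k$ for every $G \in \mathcal{H}_{n,m}$.
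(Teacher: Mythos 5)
Your proposal is correct and follows essentially the same route as the paper's proof: both reduce the Activation Markov Chain to a chain on the number of islands playing the correct action and argue that its transition kernel is graph-independent because, in the $\epsilon\to 0$, $\phi\to\infty$ limit, payoff shocks flip $k$ symmetrically, each weak-link activation increments $k$ by exactly one at rate $\gamma$ whenever the state is diverse, and escape from the conformal states occurs at the total tremble rate $\epsilon$. Your write-up is somewhat more careful than the paper's (in particular about why tremble rates and the nonemptiness of $\mathcal{W}(t)$ do not depend on island sizes or the weak-link topology), but the decomposition and key observations are the same.
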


\begin{proof}
Consider the following Markov Chain representation of the general Markov
chain in Definition \ref{def:weak_link_MC} for a network $G$:

\begin{itemize}
\item States: $s_k$, for $0 \leq k \leq m+1$.
\end{itemize}

Here, the state $s_{k}$ denotes all the action profiles, where exactly $k$
blocks play the right action. Note that for a graph $G$, $d_{k}^{G}$ is the
stationary probability of state $s_{k}$ in this Markov chain. This is an
embedded Markov chain where the state of the system is observed each time
there is a payoff shock.

Now, for a graph $G$ let $\mathbb{P}^G_{s_k}(\cdot)$ denote the transition
probability to any state $s_j$ from state $s_k$ for a graph $G$. We show
that this transition probability is the same for all graphs $G \in \mathcal{H%
}_{n,m}$. This can be easily seen as follows.

\begin{itemize}
\item Transition from state $s_0$ or $s_{m+1}$: This happens initially due
to a random flip with probability $\epsilon$. This is common for all graphs $%
G$.

\item Transition from any other state: This happens due to the activation of
weak links. Since the activation of each weak links, adds one more block to
play the right action, the only factor which determines the transitions are
the number of weak link activations (since we are working in the limit $%
\epsilon \rightarrow 0$ and $\phi \rightarrow \infty$). Since all graphs in $%
\mathcal{H}_{n,m}$ have the same number of components, and each weak link
adds exactly one new block to the number of blocks playing the right action,
we have that this transition probability is also common for all graphs $G
\in \mathcal{H}_{n,m} $.
\end{itemize}

Therefore, since the events which trigger a transition between states is
common for all graphs $G$ and have the same probabilities, the transitions $%
\mathbb{P}^G_{s_k}(\cdot)$ is common for all graphs $G$. Finally, since the
transition probabilities are the same for all graphs $G$, we have that the
final stationary distribution would also be the same, thereby completing the
proof of the lemma.
\end{proof}

Now, note that there are several possible orientations under the constraint
that $k$ out of the $m+1$ blocks play the right action. However, for $k = 0$
and $k = {m+1}$ there is exactly one orientation: All agents play the bad
action, or all agents play the good action respectively. Let $\eta^G_G$ and $%
\eta^G_B$ denote the steady state probability of the all good and all bad
states for graph $G$ respectively. We have the following corollary.

\begin{corollary}
\label{cor:interm_unif_equal} For all graphs $G\in \mathcal{H}_{n,m}$, we
have: 
\begin{equation*}
\eta _{G}^{G}=\eta _{G},\qquad \eta _{B}^{G}=\eta _{B}.
\end{equation*}
\end{corollary}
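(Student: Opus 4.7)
The plan is to derive this corollary as an immediate consequence of Lemma \ref{lemma:helper_1}. First, I would recall that $d_k^G$ is defined as the stationary probability of being in \emph{any} state in which exactly $k$ of the $m+1$ islands play the higher-reward action. The critical observation is that for the two extreme values $k=0$ and $k=m+1$, the count uniquely determines the entire action profile: when $k = m+1$, every island plays the higher-reward action, and since each island is strongly connected and, by Theorem \ref{thm:eq_chara}, all agents within a strongly-connected component play the same action in any BNE, this forces every agent in the network to play the higher-reward action, i.e., the ``all good'' configuration. Symmetrically, $k=0$ forces every island (and hence every agent) to play the lower-reward action, giving the ``all bad'' configuration.

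Consequently, the sums defining $d_0^G$ and $d_{m+1}^G$ each collapse to a single term, so $\eta_B^G = d_0^G$ and $\eta_G^G = d_{m+1}^G$. The second step is then to invoke Lemma \ref{lemma:helper_1} at $k=0$ and $k=m+1$, which gives $d_0^G = d_0$ and $d_{m+1}^G = d_{m+1}$, quantities independent of the particular graph $G \in \mathcal{H}_{n,m}$. Setting $\eta_B := d_0$ and $\eta_G := d_{m+1}$ completes the argument. There is no genuine obstacle here: Lemma \ref{lemma:helper_1} has already done the substantive work by establishing that the coarse-grained chain on $\{s_0,\ldots,s_{m+1}\}$ has the same transition probabilities, and therefore the same stationary distribution, for every $G \in \mathcal{H}_{n,m}$; the corollary is merely the observation that the two extreme coarse states $s_0$ and $s_{m+1}$ are each supported by a single fine-grained action profile, so the fine and coarse stationary probabilities coincide at these endpoints.
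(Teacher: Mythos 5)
Your proposal is correct and matches the paper's own (implicit) argument: the paper likewise observes that for $k=0$ and $k=m+1$ there is exactly one orientation of the islands, so $\eta_B^G=d_0^G$ and $\eta_G^G=d_{m+1}^G$, and then the claim follows directly from Lemma \ref{lemma:helper_1}. Your additional appeal to Theorem \ref{thm:eq_chara} to justify that each island plays a uniform action is consistent with the paper's setup and does not change the argument.
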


Next, we define: 
\begin{align}
\mathcal{I}_{n,m}& =\{\text{Island networks with $n$ nodes and $m$
components, where one has}  \notag \\
& \text{ $n-m+1$ nodes and the others just have $1$ node}\}.  \notag
\end{align}%
First, we show that for any island network in $\mathcal{H}_{n,m}$, we can
always find another network in $\mathcal{I}_{n,m}$ which has a higher
average welfare. This is shown in the following lemma:

\begin{lemma}
\label{lemma:helper_2} For any graph $G \in \mathcal{H}_{n,m}$, there exits
a graph $G^{\prime }\in \mathcal{I}_{n,m}$ such that $S^G \leq S^{G^{\prime
}}$.
\end{lemma}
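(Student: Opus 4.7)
The plan is to reduce the problem to a transformation argument that exploits the invariance from Lemma A1. First, I would invoke Lemma A1 and its Corollary to observe that the distribution $\{d_k\}$ of the number of components playing the right action, together with the steady-state probabilities $\eta_G$ and $\eta_B$ of the two conformal states, is an invariant of the class $\mathcal{H}_{n,m}$. Letting $q_i^G$ denote the steady-state probability that component $i$ plays the action with the higher reward, the welfare then decomposes as
\begin{equation*}
\mathcal{S}^G = \frac{1}{n}\sum_{i=1}^m q_i^G\, m_i,
\end{equation*}
and the constraint $\sum_i q_i^G = \sum_k k\, d_k =: \bar k$ holds uniformly for every $G \in \mathcal{H}_{n,m}$.

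Second, I would introduce an elementary \emph{mass-concentration} move. Given any $G$ with component sizes $m_1 \geq m_2 \geq \cdots \geq m_m$, if some $m_j > 1$ for $j \neq 1$, I would build a graph $G' \in \mathcal{H}_{n,m}$ identical to $G$ except that a single node of the $j$-th component is relabeled as belonging to the first component, producing sizes $(m_1+1, m_2, \ldots, m_j - 1, \ldots, m_m)$ and with the weak-link structure adapted in the natural way (weak links incident to the transferred node are reassigned to the enlarged first component). Iterating this move at most $\sum_{j>1}(m_j - 1)$ times yields a graph in $\mathcal{I}_{n,m}$, so it suffices to show $\mathcal{S}^{G'} \geq \mathcal{S}^G$ at every step.

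Third, to establish this monotonicity, I would construct a pathwise coupling of the two activation Markov chains of $G$ and $G'$ such that the enlarged first component in $G'$ plays the right action whenever \emph{either} the first component or the transferred node of $G$ does. Combined with the invariant $\sum_i q_i^G = \bar k$, this coupling forces $q_1^{G'} \geq q_1^G$ while the total probability $\bar k$ can only be reallocated away from the now-smaller $j$-th component; a direct rearrangement inequality then gives
\begin{equation*}
\sum_i q_i^{G'} m_i^{G'} \;\geq\; \sum_i q_i^G m_i^G,
\end{equation*}
which is exactly the desired welfare comparison.

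The main obstacle is specifying the coupling in the third step, because the set of eligible weak links depends on the current action profile (eligibility requires a mismatch), and eligibility in $G$ and $G'$ differs precisely on the links incident to the transferred node. I would handle this by pairing each weak-link activation clock of $G$ with a corresponding activation clock of $G'$, and arguing that every update that in $G$ would inform either the first component or the transferred node is matched, in $G'$, by an update that informs the merged first component; $\epsilon$-trembles and payoff shocks are coupled identically. Once this alignment is in place the rearrangement step is short, and the lemma follows by induction on the number of non-singleton non-largest components.
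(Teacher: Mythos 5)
Your overall skeleton matches the paper's: decompose welfare as $\mathcal{S}^G=\sum_i \zeta_i^G m_i$ (the paper's $\zeta_i$ is your $q_i$), use Lemma \ref{lemma:helper_1} to fix the invariant $\sum_i\zeta_i^G$ across $\mathcal{H}_{n,m}$, transform $G$ into a member of $\mathcal{I}_{n,m}$, and argue that the receiving component's probability of playing the right action does not decrease. Where you diverge is the transformation itself: you move one node at a time into the \emph{largest} component and try to prove monotonicity at each step by a pathwise coupling, whereas the paper performs a single global move, relocating all mass into the component $k^*=\underset{i}{\func{argmax}}\ \zeta_i^G$ and shrinking every other component to a singleton.

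That difference is where your argument has a genuine gap. After transferring one node from component $j$ to component $1$, the welfare change is
\begin{equation*}
\sum_i q_i^{G'}m_i^{G'}-\sum_i q_i^{G}m_i^{G}=\sum_i\bigl(q_i^{G'}-q_i^{G}\bigr)m_i^{G}+\bigl(q_1^{G'}-q_j^{G'}\bigr),
\end{equation*}
and controlling its sign requires knowing how \emph{every} $q_i$ moves, not just $q_1$. Lemma \ref{lemma:helper_1} pins down only the sum $\sum_i q_i$; it does not imply that the reallocation is confined to components $1$ and $j$, because reassigning the transferred node's weak links changes the global eligibility set $\mathcal{W}(t)$ and hence can perturb every $q_i$. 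Moreover, you transfer into the largest component rather than the one with the largest $q_i$, so even the within-graph rearrangement direction is not guaranteed at a given step. The paper's one-shot construction avoids both issues: choosing $k^*$ as the argmax makes the reweighting inequality $\sum_i\zeta_i^G m_i\le\zeta_{k^*}^G(n-m+1)+\sum_{i\ne k^*}\zeta_i^G$ a genuine rearrangement bound within the \emph{same} graph, and the comparison with $G'$ then reduces, via the sum invariance, to the single monotonicity claim $\zeta_{k^*}^{G'}\ge\zeta_{k^*}^{G}$. If you want to keep the incremental route, you would need either to strengthen the coupling so that the reallocation is provably nonnegative against the weights $m_i^G$, or to transfer into the argmax component at each step and handle the fact that the argmax may change along the way.
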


\begin{proof}
Note that the average welfare for a graph with $m$ components, each with $%
k_i $ nodes ($i = 1, 2, \cdots m$) can be written as: 
\begin{align}
\mathcal{S}^G &= \sum_{s} \eta^G_s f(s)  \notag \\
&= \sum_{s} \eta^G_s \left[ \sum_{i=1}^k \mathbbm{1} (\text{component $i$
plays the action with higher reward in state $s$}) \right]  \notag \\
&= \sum_{i = 1}^k \zeta_i^G m_i.  \notag
\end{align}
Here $\zeta_i$ denotes the steady state probability that component $i$ will
be playing the action with the higher material payoff. Let $k^* = \underset{i%
}{\func{argmax}} \ \zeta^G_i$, i.e., $k^*$ is that component in $G$ which
has the highest probability of playing the better action in steady state.

Now, consider another network with the same weak link structure, but all
nodes are in component $k^*$, and all other components have only a single
node. Let this network be denoted by $G^{\prime }$. Note that $G^{\prime
}\in \mathcal{I}_{n,m}$. Furthermore, note that $G^{\prime }$ will have the
same distribution $d_k$ as the graph $G$, and in particular, $\sum \zeta_i^G
= \sum \zeta_i^{G^{\prime }}$ (from Lemma \ref{lemma:helper_1}). Also, since
we are adding more nodes to the component $k^*$, we will have $%
\zeta^{G^{\prime }}_{k^*} \geq \zeta^{G}_{k^*}$. This clearly shows that

\begin{align}
\mathcal{S}^{G}& =\sum_{i=1}^{k}\zeta _{i}^{G}m_{i}  \notag \\
& =\zeta _{k^{\ast }}^{G}m_{k^{\ast }}+\sum_{i\neq k^{\ast }}\zeta
_{i}^{G}m_{i}  \notag \\
& \leq \zeta _{k^{\ast }}^{G}(n-m+1)+\sum_{i\neq k^{\ast }}\zeta _{i}^{G} 
\notag \\
& \leq \zeta _{k^{\ast }}^{G^{\prime }}(n-m+1)+\sum_{i\neq k^{\ast }}\zeta
_{i}^{G^{\prime }}  \notag \\
& =\mathcal{S}^{G^{\prime }}.  \notag
\end{align}%
which completes the proof.
\end{proof}

Therefore, Lemma \ref{lemma:helper_2} tells us that it is enough to restrict
our attention to graphs in $\mathcal{I}_{n,m}$. We refer to the component
with $n-m+1$ nodes as the core of the graph $G$.

We move our attention to a different representation of the markov chain.
Consider the chain shown in Figure

\begin{figure}[h!]
\centering
\includegraphics[width=0.55\textwidth]{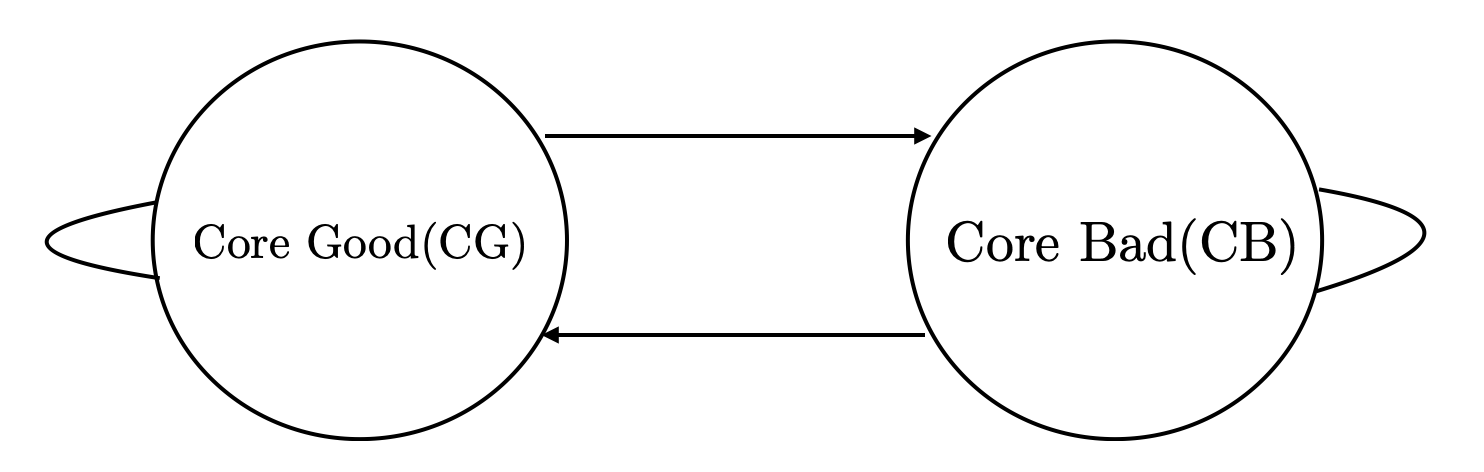}
\caption{Two networks to compare}
\label{fig:Link_Comparison}
\end{figure}

The the two states are the following:

\begin{itemize}
\item Core Good (\texttt{CG}) - These represent the states where the core
component comprising of $n-m + 1$ agents play the right action.

\item Core Bad (\texttt{CB}) - These represent the states where the core
component comprising of $n-m + 1$ agents play the wrong action.
\end{itemize}

Let $\eta _{\mathtt{CG}}^{G}$ and $\eta _{\mathtt{CB}}^{G}$ denote the
stationary distribution of this chain. Note that these stationary
distributions must satisfy: 
\begin{align}
\eta _{\mathtt{CG}}^{G}& =\sum_{\text{states $i$ where core is good}}\eta
_{i}^{G}  \notag \\
\eta _{\mathtt{CB}}^{G}& =\sum_{\text{states $i$ where core is bad}}\eta
_{i}^{G}.  \notag
\end{align}

We have the following crucial lemma which characterizes the behavior of star
networks:

\begin{lemma}
\label{lemma:prefinal} We have: 
\begin{equation*}
\eta _{\mathtt{CG}}^{G_{\text{star}}}\geq \eta _{\mathtt{CG}}^{G},\qquad
\forall \ G\in \mathcal{I}_{n,m}.
\end{equation*}
\end{lemma}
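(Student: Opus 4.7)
My approach is to leverage the decomposition $\eta_{\mathtt{CG}}^G = \sum_{k=0}^{m} d_k q_k^G$, where $d_k$ is the stationary probability that exactly $k$ components play the right action and $q_k^G$ is the conditional probability that the core is one of them. By Lemma \ref{lemma:helper_1} the weights $d_k$ are common to every $G \in \mathcal{I}_{n,m}$, and the boundary values $q_0^G = 0$, $q_m^G = 1$ hold for all $G$ (the latter by the corollary immediately after Lemma \ref{lemma:helper_1}). The lemma therefore reduces to showing $q_k^{G_{\text{star}}} \geq q_k^G$ for each $1 \leq k \leq m-1$ and every $G \in \mathcal{I}_{n,m}$.

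The proof will rest on a one-sided monotonicity property enjoyed by the star. In the limits $\epsilon \to 0$ and $\phi \to \infty$, within-epoch dynamics are driven solely by weak-link activations, and each such activation aligns both of its endpoint-components on the action with the higher material reward. In $G_{\text{star}}$ every weak link is incident to the core, so each activation either moves the core from wrong to right or leaves its action unchanged; the core can therefore only migrate from $\mathtt{CB}$ to $\mathtt{CG}$ within an epoch, never the reverse. In any non-star $G \in \mathcal{I}_{n,m}$, at least one weak link is singleton-to-singleton, so some activations propagate information among singletons without reaching the core, and the first-passage time for the core to become correct is stochastically larger.

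To convert this structural advantage into the required stochastic inequality, I will couple the two systems by synchronizing the payoff shocks, the tremble targets, and the ticks of the weak-link activation Poisson clock, and route each activation through the respective weak-link structures. Under this coupling I will argue inductively over activation events that the core in $G_{\text{star}}$ is correct at every payoff-shock time at which the core in $G$ is correct, which, averaged against the common joint distribution of correct-component counts, yields $q_k^{G_{\text{star}}} \geq q_k^G$. The $\phi \to 0$ case, in which each weak link fires at most once, is handled separately by a direct counting argument that favors the star because every activation is useful to the core.

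The principal obstacle is the interaction between the uniform selection rule on $\mathcal{W}(t)$ and the coupling: the activated link depends on the current action profile, which evolves differently across the two graphs. I expect to address this by realizing the activation process as a Poisson point process on time--link pairs and constructing a measurable map from events in $G$ to events in $G_{\text{star}}$ that preserves eligibility and weakly advances the indicator that the core has learned. Careful bookkeeping is required to handle moments when $\mathcal{W}(t)$ is empty in one system but not the other, a technical complication controlled by the limit $\epsilon \to 0$, which suppresses the initial triggers of disagreement to a single known source per epoch.
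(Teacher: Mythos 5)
Your reduction is sound as far as it goes: writing $\eta_{\mathtt{CG}}^{G}=\sum_{k}d_{k}q_{k}^{G}$ with the weights $d_{k}$ common across $\mathcal{I}_{n,m}$ (Lemma \ref{lemma:helper_1}) correctly reduces the claim to $q_{k}^{G_{\text{star}}}\geq q_{k}^{G}$, and the structural observation driving everything --- in the star every weak link is incident to the core, so every activation from a state with a wrong core corrects the core, whereas in a general $G\in\mathcal{I}_{n,m}$ activations can be wasted on singleton-to-singleton transmission --- is exactly the right one. But the entire burden of the proof then falls on the coupling, and the coupling is not constructed; it is announced (``I will couple\dots'', ``I expect to address this by\dots''). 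The obstacle you yourself identify is the real one and it is not minor: the activated link is drawn uniformly from $\mathcal{W}(t)$, a set whose size and composition differ between the two systems and evolve divergently, so there is no canonical measurable map from activation events in $G$ to activation events in $G_{\text{star}}$, and your inductive invariant can break precisely when $\mathcal{W}(t)$ is empty in the star (all leaves agree with a wrong core) while nonempty in $G$. Your proposed fix --- that $\epsilon\to 0$ ``suppresses the initial triggers of disagreement to a single known source per epoch'' --- does not hold: diverse action profiles are inherited across payoff shocks (the shock flips which action is correct but leaves the profile intact), so disagreement within an epoch need not originate from a tremble in that epoch, and the leaf configurations of the two systems can differ at the start of an epoch for reasons having nothing to do with $\epsilon$. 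As written, the proposal is a plan whose hardest step is unexecuted.

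It is worth noting that the paper sidesteps the coupling entirely. It lumps the AMC into the two-state $\mathtt{CG}/\mathtt{CB}$ chain and compares one-step transition probabilities: for the star, $\mathbb{P}^{\text{star}}(\mathtt{CG}\mid s)$ equals a constant $p_{1}$ for every diverse $s\in\mathtt{CB}$ (the first activation always informs the core), while for any other $G\in\mathcal{I}_{n,m}$ one has $\mathbb{P}^{G}(\mathtt{CG}\mid s)\leq p_{1}$ since at least one, and possibly several, activations are needed before the core learns; the analogous inequality from $\mathtt{CG}$ plus the monotonicity of the two-state stationary distribution in these entries finishes the argument. That route trades your per-$k$ conditional dominance for a cruder but self-contained transition-probability comparison, and it avoids ever having to match individual activation events across graphs. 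If you want to salvage your approach, you would need to either carry out the Poisson-point-process construction in full (handling the eligibility and emptiness mismatches explicitly) or abandon the pathwise coupling in favor of a one-step comparison of the kind the paper uses.
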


\begin{proof}
First, note from Corollary \ref{cor:interm_unif_equal}, we have that the
state where all blocks play the right action (or all blocks play the wrong
action) have the same probability for all graphs $G \in \mathcal{H}_{n,m}$.

Next, note that from any diverse state where the core is bad, i.e., any
diverse state in $\mathtt{CB}$, the probability of moving to a state in $%
\mathtt{CG}$ is the same for $G_{star}$, i.e., 
\begin{equation*}
\mathbb{P}^{star}(\mathtt{CG}|s)=p_{1}>0,\qquad \forall \ s\in \mathtt{CB.}
\end{equation*}%
This is because in the star network with diverse states, any weak link
activation would let the core know about the correct action and then the new
state will be in $\mathtt{CG}$. The rest of the weak link activations do not
matter. Other events which lead to a state in $\mathtt{CG}$, like the payoff
shock, or $\epsilon $-flip remains the same, independent of the state $s$ in 
$\mathtt{CB}$.

Next, for any other graph $G\in \mathcal{H}_{n,m}$, we have 
\begin{equation*}
\mathbb{P}^{G}(\mathtt{CG}|s)\leq p_{1},\qquad \forall \ s\in \mathtt{CB.}
\end{equation*}%
Note that this is because, from a diverse state, at least one weak link
activation is needed to inform the core about the right action. It might be
possible that more than one weak link activation is needed (depending on the
structure of $G$, as well as the state $s$). Furthermore, the other events
which lead to a state in $\mathtt{CG}$, like the payoff shock, or $\epsilon $%
-flip remains the same, independent of the state $s$ in $\mathtt{CB}$ or the
graph $G$.

These two observations leads to the following inequality for the transition
probabilities for the Markov chain in Figure \ref{fig:Link_Comparison}. 
\begin{equation*}
\mathbb{P}^{star}(\mathtt{CG}|\mathtt{CB})\geq \mathbb{P}^{G}(\mathtt{CG}|%
\mathtt{CB}).
\end{equation*}%
By exactly the same argument, we have 
\begin{equation*}
\mathbb{P}^{star}(\mathtt{CG}|\mathtt{CG})\geq \mathbb{P}^{G}(\mathtt{CG}|%
\mathtt{CG}).
\end{equation*}%
These inequalities on the
transition probabilities give us the desired result.
\end{proof}

Now, we put all these results together to get the final theorem. From Lemma %
\ref{thm:general_MC_result}, we have that the average welfare for a graph $G$
is given by: 
\begin{align}
\sum_{i}\eta _{i}^{G}f(i)& =\frac{1}{n}\sum_{k=1}^{m}\sum_{i_{k}}\eta
_{i_{k}}^{G}\left[ (n-m+1)\mathbbm{1}(\text{core good})+k-\mathbbm{1}(\text{%
core good})\right]  \notag \\
& =\frac{1}{n}\sum_{k=1}^{m}\left[ (n-m-1)\sum_{i_{k}}\eta _{i_{k}}^{G}%
\mathbbm{1}(\text{core good})+k\sum_{i_{k}}\eta _{i_{k}}^{G}\right]  \notag
\\
& =\frac{1}{n}\sum_{k=1}^{m}\left[ (n-m)\sum_{i_{k}}\eta _{i_{k}}^{G}%
\mathbbm{1}(\text{core good})+kd_{k}\right]  \notag \\
& =\frac{1}{n}\left[ (n-m)\sum_{k=1}^{m}\sum_{i_{k}}\eta _{i_{k}}^{G}%
\mathbbm{1}(\text{core good})+\sum_{k=1}^{m}kd_{k}\right]  \notag \\
& =\frac{1}{n}\left[ (n-m)\eta _{\mathtt{CG}}^{G}+\sum_{k=1}^{m}kd_{k}\right]
.  \notag
\end{align}

which is maximized when $\eta_{\mathtt{CG}}^G$ is maximized. Now from Lemma %
\ref{lemma:prefinal}, we have that the star network maximizes this
probability and therefore this completes the proof of the first part of the
theorem.

The second part of the theorem is derived by trying to maximize the
probability that the core component is playing the action with the higher
material payoff. In order to achieve this, not that if we have $\gamma = 
\mathcal{O} (m^{1/2} \lambda)$ and suppose $\phi = \mathcal{O} (m^{-1/4}
\lambda)$, we have that the probability of a weak link being activated in an
epoch $\rightarrow 1$ as $m \rightarrow \infty$. This implies that in every
epoch, the core component will play the better action with probability
approaching 1, as m grows to $\infty$. The average welfare in the limit can
be lower bounded by just the average welfare of the core component, which is
given by $(n-m+1)/m$ which goes to 1, since $m/n \rightarrow 0$. This
completes the proof of the second part of the theorem.

\subsection*{Proof of Theorem \protect\ref{lemma:small_disc_factor}}

\label{proof:disc_factor}

When the Poisson clock (of rate 1) of Agent $i$ ticks , we say that an agent
becomes `active'. When Agent $i$ becomes active at time $t$, let its belief
be $\mu_i(t) = 1/2$. This means that all its neighbors are playing the same
action. For sake of convenience, assume that this is Action $0$. 
In this case, the per time step reward would be $R_0 + \tau f_i(0) > R_0 +
\tau d_{\min}$ where $d_{min}$ is the minimum degree of the graph.

If instead, agent $i$ decides to explore and play Action $1$, the reward
would be $R_1$. However, after exploring, agent $i$ will certainly know
which action is better.

Since the difference between rewards is always $1$, we assume that the
higher reward action has a reward 1, and the lower reward is $0$.

Using this, the total expected reward after exploring can be upper bounded
as: 
\begin{align}
\mathbb{E}[R_{1}]& +\mathbb{E}\left[ \sum_{j=1}^{\infty }\beta ^{j}\left(
R_{a(i,t_{j})}-R_{1-a(i,t_{j})}+\tau
(f_{i}(a(i,t_{j}))-f_{i}(1-a(i,t_{j})))\right) \right]  \notag \\
& \leq \frac{1}{2}+\sum_{j=1}^{\infty }\beta ^{j}\mathbb{E}\left[
R_{a(i,t_{j})}-R_{1-a(i,t_{j})}+\tau (f_{i}(a(i,t_{j}))-f_{i}(1-a(i,t_{j})))%
\right]  \notag \\
& \leq \frac{1}{2}+\sum_{j=1}^{\infty }\beta ^{j}\left( 1+\tau d_{\max
}\right)  \notag \\
& =\frac{1}{2}+\frac{\beta }{1-\beta }(1+\tau d_{\max }).  \notag
\end{align}

On the other hand, if the agent does not explore, the expected sum can be
lower bounded as: 
\begin{align}
\mathbb{E}[R_{1}]& +\mathbb{E}[\sum_{j=1}^{\infty }\beta ^{j}\left(
R_{a(i,t_{j})}-R_{1-a(i,t_{j})}+\tau
(f_{i}(a(i,t_{j}))-f_{i}(1-a(i,t_{j})))\right) ]  \notag \\
& \geq \frac{1}{2}+\tau d_{\min }.  \notag
\end{align}

Now, if 
\begin{equation*}
\frac{1}{2}+\tau d_{\min }>\frac{1}{2}+\frac{\beta }{1-\beta }(1+\tau
d_{\max }),
\end{equation*}%
then the agent will have no incentive to deviate.

Simplifying this inequality, we have: 
\begin{equation*}
\tau d_{\min }>\beta (2+\tau d_{\max }),
\end{equation*}%
which gives us the condition that if: 
\begin{equation*}
\beta <\frac{\tau d_{\min }}{2+\tau d_{\max }},
\end{equation*}%
then the agents will have no incentive to deviate.%

\end{document}